\newcommand{\ignore}[1]{}
\renewcommand{\cref}{\Cref}
\newtheorem{proposition}{\bf Proposition}
\newtheorem{theorem}{\bf Theorem}
\newtheorem{lemma}{\bf Lemma}
\newtheorem{corollary}{\bf Corollary}
\newtheorem{definition}{\bf Definition}
\newtheorem{claim}{\bf Claim}
\newtheorem{problem}{\bf Problem}
\newcommand{\AODAG}{AND-OR DAG }
\newcommand{\margG}{\textsc{MarginalGreedy }}
\newcommand\p{\ensuremath{\mathsf{P}}}
\newcommand{\R}{\mathbb{R}}
\newcommand{\bigoh}{\mathcal{O}}
\newcommand\np{\ensuremath{\mathsf{NP}}}
\begin{document}


\title{Efficient and Provable Multi-Query Optimization}



%
%
%
%

\numberofauthors{2} 

\author{
%
%
\alignauthor
Tarun Kathuria\\
       \affaddr{Microsoft Research}\\
       \email{t-takat@microsoft.com}
\alignauthor
S. Sudarshan\\
       \affaddr{Indian Institute of Technology Bombay}\\
       \email{sudarsha@cse.iitb.ac.in}}

\maketitle

\begin{abstract}
Complex queries for massive data analysis jobs have become increasingly commonplace. Many such queries contain common subexpressions, either within a single query or among multiple queries submitted as a batch. Conventional query optimizers do not exploit these subexpressions and produce sub-optimal plans. The problem of multi-query optimization (MQO) is to generate an optimal \textit{combined} evaluation plan by computing common subexpressions once and reusing them. Exhaustive algorithms for MQO explore an $\bigoh(n^n)$ search space. Thus, this problem has primarily been tackled using various heuristic algorithms, without providing any theoretical guarantees on the quality of their solution.

In this paper, instead of the conventional cost minimization problem, we treat the problem as maximizing a linear transformation of the cost function. We propose a greedy algorithm for this transformed formulation of the problem, which under weak, intuitive assumptions, provides an approximation factor guarantee for this formulation. We go on to show that this factor is optimal, unless $\p = \np$. Another noteworthy point about our algorithm is that it can be easily incorporated into existing transformation-based optimizers. 
We finally propose optimizations which can be used to improve the efficiency of our algorithm.
\end{abstract}
\vspace{-0.5em}
\category{H.2.4}{Database Management}{Systems}[Query processing]

\vspace{-0.5em}
\keywords{Multi-query optimization, greedy algorithms, approximation algorithms}

\vspace{-0.5em}
\section{Introduction}
Modern data analytics platforms frequently have to run scripts which contain a large number of complex queries. Often, these queries contain common subexpressions due to the nature of the analysis performed. These subexpressions may occur within a single complex query which i) contains multiple correlated nested subqueries or ii) if the database contains many materialized views which are referenced multiple times in the query. A more interesting case where common subexpressions arise is when a batch of related queries are being executed together.

Conventional query optimizers are not suited for such scenarios since they do not exploit these subexpressions and instead produce locally optimal plans for each query.  These plans can be globally sub-optimal since they do not make use of the shared subexpressions while generating the plans. The goal of multi-query optimization (MQO) is to generate query plans where these subexpressions are executed once and their results used by multiple consumers. The best plan is selected in a completely cost-based manner.

We now present an example to illustrate the MQO problem and how locally optimal plans may be globally sub-optimal for multiple queries in the presence of common subexpressions.
\begin{example}\label{firstExample}(Example 1.1 in \cite{Roy2000})
Consider a batch consisting of two queries $(A \bowtie B \bowtie C)$ and $(B \bowtie C \bowtie D)$ whose locally optimal plans (i.e., individual best plans) are $(A \bowtie B) \bowtie C$ and $(B \bowtie C) \bowtie D$ respectively. The individual best plans for the two queries do not have any common subexpressions. However, consider a locally sub-optimal plan for the first query $A \bowtie (B \bowtie C)$. It is clear that $(B \bowtie C)$ is a common subexpression and can be computed once and used by both queries. 

Consider the following instantiation of the various costs for the two queries shown in Figure \ref{fig:mqoexample}. Suppose the base relations \textit{A, B, C} and \textit{D} each have a scan cost of 10 units. Each of the joins have a cost of 100 units, giving a total evaluation cost of 460 units for the locally optimal plans shown in Figure \ref{fig:mqoexample}a. On the other hand, in the plan shown in Figure \ref{fig:mqoexample}b, the common subexpression $(B \bowtie C)$ is first computed and materialized on the disk at a cost of 10. Then, it is scanned twice - the first time to join with A in order to compute the first query, and the second time to join it with D in order to compute the second - at a cost of 10 per scan. Each of these joins have a cost of 100 units. Thus, the total cost of this consolidated plan is 370 units, which is lesser than the cost of the locally optimal plan in Figure \ref{fig:mqoexample}a.

It should be noted that blindly sharing a subexpression may not always lead to a globally optimal strategy. For example, there may be cases where the cost of joining the subexpression $(B \bowtie C)$ with $A$ is very large compared to the cost of the plan $(A \bowtie B) \bowtie C$; in such cases it may make no sense to reuse $(B \bowtie C)$ even if it were available. \hfill $\square$
\end{example}

The benefits of a good algorithm for MQO are not just restricted to multiple queries in a batch but can also be used to find better plans for a single complex query. Consider an example of a large query consisting of multiple subqueries with a common subexpression between two subqueries. Traditional Volcano-style\cite{Graefe1993} transformation rules-based query optimizers will not consider such sharing unless it is explicitly stated as a transformation rule. Of course, one cannot state all such possible rules and, thus, these cases of sharing in are not considered by Volcano when an optimal plan is devised for single query optimization. On the other hand, MQO can be used to search over such cases\cite{Roy2000,Zhou2007}.
\begin{figure}
\centering
\includegraphics[width=\linewidth,height=0.5\linewidth]{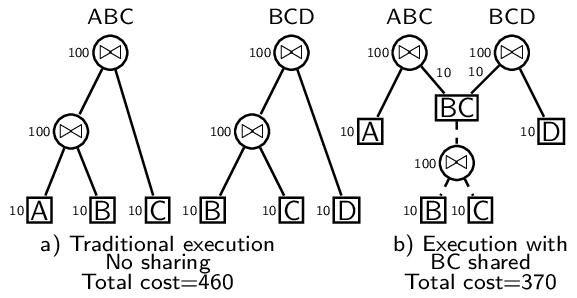}
\caption{MQO example (from \protect\cite{Roy2000}) illustrating benefit of sharing subexpressions}
\label{fig:mqoexample}
\end{figure}

While algorithms which find the optimal plan for a single query are well known, exhaustive algorithms for MQO take $\bigoh(n^n)$ time which quickly makes the problem untenable. Thus, work in this area relies on the development of various heurisics-based algorithms \cite{Zhou2007,Roy2000,Silva2012}. While most of such work seems to work well in practice, there has been no work which provides theoretical guarantees on the quality of solution obtained by any such heuristics, to the best of our knowledge. Thus, an open question is

\textit{Can we devise a polynomial-time algorithm which provides us with theoretical guarantees on the quality of the solution obtained as compared to the optimal? If so, what is the best possible polynomial-time approximation algorithm?}

As a first step towards answering this question, we propose a reformulation of the MQO problem, the motivation for which is stated next.

The canonical multi-query optimization problem is concerned with minimizing the cost of the query plan for a set of queries by choosing a set of nodes to materialize (say $M$) and then finding the optimal plan expoliting nodes in $M$. Another way to look at this problem is to maximize the ``materialization-benefit'' we get by materializing $M$ w.r.t. a naive execution plan which is locally optimal and does not exploit any common subexpressions. More formally, this corresponds to maximizing the difference of the cost of the best plan in which the set of materialized nodes is $M$ from the latter. As this is just a linear transformation of the cost function, it is clear that the maximizer of the materialization-benefit will be the minimizer of the cost.

Roy et al. \cite{Roy2000} assume a property which they call the ``monotonicity heuristic'' on the cost function. This essentially corresponds to assuming the supermodularity of the cost function defined on the set of nodes to be materialized. In \cite{Roy2000}, this assumption is used to speed up their greedy algorithm via a heap-based argument which exploits the supermodularity. This is similar to the \textsc{LazyGreedy} algorithm described in \cite{Minoux} for speeding up monotone submodular function maximization subject to cardinality constraints via the well-known greedy algorithm, which is also used by \cite{Roy2000}. On the queries used in their experiments, it was observed that the plan obtained with or without assuming supermodularity led to the same plan. This seems to imply that the supermodularity assumption may be a reasonable one and may hold in practice.

\subsection{Our contribution} 


The contributions of this paper are as follows
\begin{itemize}
  \item Motivated by \cite{Roy2000}, we proceed with the ``monotonicity heuristic'' assumption (which implies the submodularity of the materialization benefit function). \textit{Under this assumption}, we propose an approximation algorithm for the underlying problem of unconstrained, normalized submodular maximization (UNSM). Note that we allow the submodular function to take negative values, which has not been considered previously and poses a significant challenge.\footnote{Inapproximability results when the submodular function may be unnormalized are well known.}.
  \item We then present a hardness of approximation proof for the UNSM problem, which matches that obtained by our algorithm, under the weak assumption of $\p \neq \np$.  
  \item We present optimizations to our algorithm which can be used to improve the running time of the algorithm, without sacrificing any theoretical guarantees.
  \item We also consider a special case of the problem of submodular maximization under cardinality constraints. 
  \begin{itemize}
    \item A natural extension to our greedy algorithm for this problem is presented. We further propose a pruning strategy to reduce the search space before running our greedy algorithm, by exploiting this cardinality constraint.
    \item While, at this point, we do not formally prove any theoretical guarantees on the approximation factor for this constrained problem, we show that the answer obtained by our greedy algorithm is the same when run with or without this pruning.
  \end{itemize}
  \item We compare our algorithm against the Greedy algorithm and stand-alone Volcano (without MQO) on TPCD benchmark queries.
\end{itemize}
It is important to note that our approximation guarantees are for the benefit-maximization problem, under the submodularity assumption, and do not imply a multiplicative factor approximation to the cost minimization problem. However, results in our experimental section shows that our proposed algorithm performs as well as or better than the Greedy heuristic of \cite{Roy2000}.

Our techniques for the problem of multi-query optimization are presented in the context of query optimizers based on the Volcano/Cascades framework \cite{Graefe1993,Graefe1995}. This framework for optimizing queries uses transformation rules which makes it inherently extensible, and has been implemented in several widely-used commercial database systems such as Microsoft SQL Server. It should be noted, however, that our algorithm is agnostic to the query optimization framework and can be easily extended to other frameworks as well.


\textbf{Organization.} In Section \ref{Prelims}, we present a detailed overview of multi-query optimization in the context of the Volcano framework which was presented in \cite{Roy2000} along with how submodular maximization arises in this context. Section \ref{NewGreedy} presents our greedy algorithm for unconstrained, normalized submodular maximization with the proof of its approximation factor guarantee. In Section \ref{Hardness}, we prove the hardness of approximation of the unconstrained,normalized submodular maximization which rules out better approximation factors than the one attained by our algorithm, under the assumption of $\p \neq \np$. Section \ref{speedup} presents ways to speed up our algorithm. We present experimental results on benchmark queries in Section \ref{exptSection}. Related work in the areas of MQO and submodular maximization is presented in Section \ref{RelWork}. We conclude and discuss directions for future work in Section \ref{Conclusion}.

\vspace{-0.8em}
\section{Preliminaries}\label{Prelims}
This section presents some relevant background in (Multi)-Query Optimization in the Volcano framework followed by some preliminaries of submodular maximization and finally ends with how submodular maximization arises in MQO. Readers well-versed in MQO techniques in Volcano may skip to the third subsection directly.
\vspace{-0.5em}
\subsection{Query Optimization in Volcano}\label{VolcanoSQO}
\begin{figure}
\centering
\begin{subfigure}[b]{0.29\linewidth}
\centering
\includegraphics[width=1.5cm]{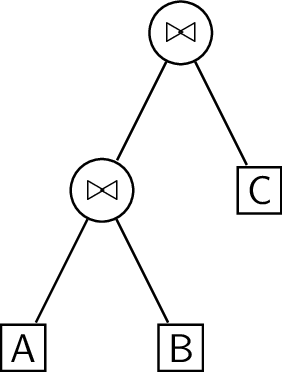}
\caption{Initial Query}
\label{fig:dagexa}
\end{subfigure}
\begin{subfigure}[b]{0.59\linewidth}
\centering
\includegraphics[width=1.5cm]{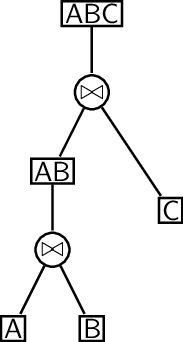}
\caption{DAG representation of query}
\label{fig:dagexb}
\end{subfigure}

\vspace{4mm}

\begin{subfigure}[b]{\linewidth}
\centering
\includegraphics[width=3.5cm]{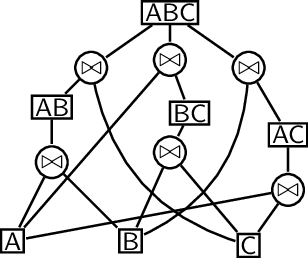}
\caption{Expanded LQDAG after transformation
(Commutativity not shown explicitly)}
\label{fig:dagexc}
\end{subfigure}
\caption{Initial Query and LQDAG Representation}
\label{fig:dagexpansion}
\vspace{-0.5em}
\end{figure}
The Volcano/Cascades query optimization framework \cite{Graefe1993,Graefe1995} is based on a system of equivalence rules, which specify that the result of a particular transformation of a query tree is the same as the result of the original query tree. The key aspect of this framework is the efficient implementation of the transformation rule-based approach.

The Volcano framework uses the AND-OR DAG representation \cite{Graefe1993,Roussopoulos:1982:VIR:319702.319729} for compactly representing the given query and its alternative query plans. An AND-OR DAG is a directed acyclic graph whose nodes can be divided into AND-nodes and OR-nodes; the AND-nodes have only OR-nodes as children and the OR-nodes have only AND-nodes as children. An AND-node corresponds to an algebraic operator, such as the join operator $(\bowtie)$ or a select operator $(\sigma)$. It represents the expression defined by the operator and its inputs. An OR-node represents a set of logical expressions that generate the same result set; the set of such expressions is defined by the children AND nodes of the OR node, and their inputs. Hereafter, we refer to the OR-nodes and AND-nodes as equivalence nodes and operator nodes respectively.

The given query tree is initially represented in the AND-OR DAG formulation. For example, the query tree of Figure \ref{fig:dagexa} is initially represented in the AND-OR DAG formulation, as shown in Figure \ref{fig:dagexb}. Equivalence nodes are shown as boxes, while operator nodes are shown in circles. 

The initial \AODAG is then expanded by applying all possible logical transformations on every node of the initial DAG created from the given query. Suppose the only possible transformations are join associativity and commutativity. Then the plans $A \bowtie (B \bowtie C)$ and $(A \bowtie C) \bowtie B$, as well as several plans equivalent to these, modulo commutativity, can be obtained by transformations on the initial \AODAG of Figure \ref{fig:dagexb}. These are represented in the DAG shown in Figure \ref{fig:dagexc}. The \AODAG representation after applying all the logical tranformations is called the (expanded) Logical Query DAG (or LQDAG).

Each operator node can have different physical implementations; for example, a join operator can be implemented as a hash join, a nested loop join or as a merge join. Once the LQDAG has been generated, physical implementation rules are applied on the logical operators to generate the physical AND-OR DAG, which is called the Physical Query DAG or PQDAG for short. 

Properties of the results of an expression, such as sort order, that do not form part of the logical data model are called physical properties \cite{Graefe1993}. The importance of exploiting physical properties such as sort order and partitioning of result sets is well known in traditional query optimization. The DAG is actually built and stored using a ``memo'' structure, a concise data structure used in the Volcano/Cascades framework to represent the entire space of equivalent query evaluation plans succintly. The AND-OR DAG representation considered for MQO actually works on the PQDAG but we present our algorithms to work at the LQDAG level for brevity. 

\subsection{Multi-Query Optimization in Volcano}
This subsection primarily focuses on the techniques presented in \cite{Roy2000} for MQO in the Volcano framework. In order to extend the Volcano \AODAG generation for MQO on a batch of queries to be jointly optimized, the queries are represented together in a single DAG, sharing subexpressions. The DAG is converted to a rooted DAG by adding a dummy operation node, which does nothing, but has the root equivalence nodes of all the queries as its inputs.

The two main challenges for a multi-query optimizer are : 
\begin{enumerate}
\item Recognizing possibilities of shared computation by identifying common subexpressions.
\item Finding a globally optimal evaluation plan exploiting the common subexpressions identified.
\end{enumerate}

Roy et al. \cite{Roy2000} present an efficient hashing-based algorithm that identifies the set of all common subexpressions, including subqueries that are syntactically different but semantically equivalent, in a single bottom-up traversal of the LQDAG by using the ``memo'' structure; for details see \cite{Roy2000}. This is similar to the ``expression fingerprinting'' used to identify the common subexpressions in \cite{Silva2012}. The combined LQDAG for the queries of Example \ref{firstExample} is shown in Figure \ref{fig:combinedDAG}. This step takes exponential time as the size of the DAGs may itself be exponential and is unavoidable, even in single-query optimization.

\begin{figure}
\centering
\includegraphics[width=0.75\linewidth,height=0.5\linewidth]{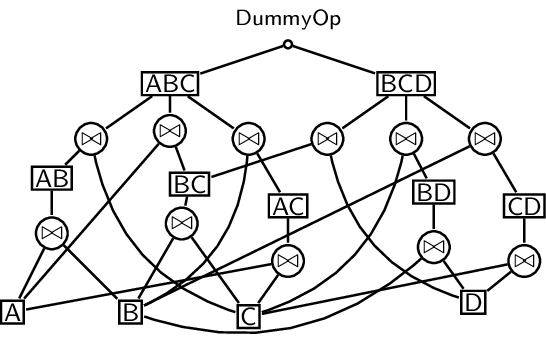}
\setlength{\abovecaptionskip}{-5pt}
\caption{Combined LQDAG for queries in Example \ref{firstExample}}
\label{fig:combinedDAG}
\end{figure}


 Similar to the single query optimization done by Volcano, in a single-pass, one can annotate each node in the DAG with its estimated cost. Note that the cost estimator functions are taken as input to the optimizer, i.e., the optimizer algorithm is agnostic to the cost estimates. Indeed, this is one of the reasons why the Volcano query optimizer framework is widely used. It is important to note that in the single query optimization as well as the multi query optimization setting, one assumes that the cost estimates provided to us are correct for any guarantees to hold. Thus, we also work under the assumption that the cost estimates are correct. After the common subexpressions are identified and the cost of each node computed, the next task is to find the best consolidated plan for the queries exploiting the subexpressions. 

 In this paper, we are primarily concerned with the optimization philosophy adopted by the Greedy algorithm in \cite{Roy2000} which is presented next. For a set of equivalence nodes $S$, let \textit{bestCost}$(Q,S)$ (for brevity, $bc(S)$) denote the cost of the optimal plan for Q given that nodes in S are to be materialized (this includes the cost of computing and materializing nodes in S). Here $Q$ is the combined query DAG with the dummy root operator node with inputs being the DAGs of $Q_1, \ldots, Q_k$, as described above. The $bc(S)$ function, of course, depends on the cost estimates and is treated as a black-box for the MQO algorithms. Given a set of nodes $S$ to be materialized, \cite{Roy2000} present an efficient scheme to find the best plan and the best cost, $bc(S)$ (this includes the cost of materializing $S$, which may be done in multiple ways and is figured out by the optimizer in \cite{Roy2000} as well).

 Now, we just need to identify the subset $S$ of nodes in the \AODAG for which \textit{bestCost}$(Q,S)$ is minimum. However, an exhaustive algorithm which enumerates all possible subsets $S$ will take time exponential in the size of the \AODAG, which itself may be exponential in size. In \cite{Roy2000}, they propose an intuitive greedy algorithm which iteratively picks which node to materialize. At each iteration, the node $x$ that gives the maximum reduction in the cost, if materialized, is chosen to be added to the current set of materialized nodes $X$. While this greedy algorithm is shown to work well in practice, they \cite{Roy2000} do not theoretically argue about the quality of solution obtained via this algorithm. The algorithm is presented below for completeness.

\begin{algorithm}
\caption{Greedy Algorithm of \protect\cite{Roy2000}}
\begin{algorithmic}[ht]
\State $X = \emptyset$
\State $Y =$ Set of shareable equivalence nodes in the DAG 
\While {$Y \neq \emptyset $}
\State Pick $x \in Y$ which minimizes $bc(X \cup \{x\})$
\If{$bc(X) > bc(X\cup\{x\})$}
\State $X=X\cup\{x\}, Y = Y \setminus \{x\}$
\Else
\State $Y=\emptyset$
\EndIf 
\EndWhile\\
\Return $X$
\end{algorithmic}
\end{algorithm}
\vspace{-0.4em}
As noted in \cite{Roy2000}, the nodes materialized in the globally optimal plan are just a subset of the ones that are shared in some plan for the query. It is, thus, sufficient to search only over the set of \textit{shareable} equivalence nodes, instead of searching over the entire set of equivalence nodes in the DAG.

Clearly, some assumptions on the cost function have to be made in order to give theoretical guarantees for any algorithm. Furthermore, it is desirable to make assumptions which may hold in practice. Roy et al. \cite{Roy2000} make an additional assumption which they call the ``monotonicity heuristic''. \\
Define $benefit(x,X)$ as $bc(X) - bc(X \cup \{x\})$. The assumption is that 
\begin{align*}\forall \ Y \subseteq X, \  \forall \ x \notin X, \ benefit(x,X) \leq benefit(x,Y). \end{align*}
They \cite{Roy2000} make this assumption in order to improve the running time of their greedy algorithm via a heap-based argument which corresponds to the \textsc{LazyGreedy} algorithm \cite{Minoux} for faster monotone, submodular maximization. Their experiments, however, show that the plans obtained with and without the assumption had exactly the same cost. While the assumption may not always hold, their experiments seem to indicate that the assumption may be a reasonable one, in practice. Thus, in this paper, we work under this assumption to devise an algorithm \textit{with theoretical guarantees} on its performance for maximizing the ``materialization benefit''.

\subsection{Submodular Maximization}
Let $U$ be a universe of $n = |U|$ elements, let $f : 2^U \rightarrow \R$ be a function. For simplicity, we use the notation $f'(u,S)$ to denote the incremental value in $f$ of adding $u$ to $S$, i.e., $f'(u,S) = f(S\cup\{u\}) - f(S)$.

\begin{definition}
(\small{\textsc{Submodular Functions}})\\ A function $f : 2^U \rightarrow \R$ is called \textit{submodular} if \begin{align*}\forall \ A \subseteq B \subseteq U, \ \forall \ u \in U \setminus B, \mbox{we have } f'(u,A) \geq f'(u,B).
\end{align*}
\end{definition}
 \vspace{-1em}

\begin{definition}
(\small{\textsc{Supermodular Functions}})\\ A function $f : 2^U \rightarrow \R$ is called \textit{supermodular} if \begin{align*}\forall \ A \subseteq B \subseteq U, \ \forall \ u \in U \setminus B, \mbox{we have } f'(u,A)\leq f'(u,B).
\end{align*}
\end{definition}
 \vspace{-1em}

\begin{definition}
(\small{\textsc{Additive Functions}})\\ A function $c : 2^U \rightarrow \R$ is called \textit{additive} if it is of the form $c(S) = \sum_{e \in S} c(\{e\})$. 
\end{definition}
 \vspace{-1em}
\begin{definition}(\small{\textsc{Monotone Functions}})\\
A function $f : 2^U \rightarrow \R$ is said to be \textit{monotone} if \begin{align*} \forall A \subseteq B \subseteq U, \mbox{we have } f(A) \leq f(B).\end{align*}
\end{definition}
 \vspace{-1em}

\begin{definition}
(\small{\textsc{Normalized Functions}})\\ A function $f : 2^U \rightarrow \R$ is called \textit{normalized} if $f(\emptyset) = 0.$

\end{definition}

Given a normalized submodular function $f : 2^U \rightarrow \R$, the unconstrained, normalized submodular maximization (UNSM) problem is to find a set $S \subseteq U$ which maximizes the value of $f$, i.e., $\arg\max\limits_{S \subseteq U} f(S)$.

Since submodular maximization problems are in general $\mathsf{NP}$-hard and can only be approximated, a simple additive scaling of the function by a large constant to make the function non-negative and running an algorithm like \cite{Buchbinder2012} suffers in the approximation factor and moreover does not guarantee a multiplicative approximation.

It is well-known that any non-monotone submodular function $f$, with the constraint that $f(\emptyset) = 0$, can be written as the difference of a non-negative monotone submodular function $f_M$ and an additive ``cost'' function $c$. However, multiple such decompositions are possible and as we will show, there is one particular decomposition (the decomposition in Proposition \ref{decomp}) which will give us the best approximation ratio and a matching hardness of approximation.
\begin{proposition}\label{decomp} Any normalized, non-monotone (which may take negative values) submodular function $f$ can be decomposed as \begin{align*} f(S) = f_M(S) - c(S) \ \ , \forall \ S \subseteq U\end{align*} where $f_M$ is a monotone submodular function and $c$ is an additive cost function. In particular, one possible decomposition is\small
\begin{align*}
& f_M^*(S) = f(S) + \sum\limits_{e \in S}(f(U \setminus \{e\})-f(U))\\
& c^*(S) = \sum\limits_{e \in S}(f(U \setminus \{e\}) - f(U)) 
\end{align*}
\end{proposition}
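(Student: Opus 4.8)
The plan is to prove the proposition by direct verification of the three required properties of the displayed decomposition; the conceptual work is in \emph{choosing} the right pair $(f_M, c)$, which the statement hands us, so what remains is essentially bookkeeping. I would first fix notation by writing $c^*(\{e\}) = f(U\setminus\{e\}) - f(U)$ for a singleton, so that by construction $c^*(S) = \sum_{e \in S} c^*(\{e\})$; this shows $c^*$ is additive, and it makes $f_M^*(S) = f(S) + c^*(S)$ hold for every $S \subseteq U$, which is exactly the claimed identity $f(S) = f_M^*(S) - c^*(S)$. It then remains to check that $f_M^*$ is (i) normalized, (ii) submodular, and (iii) monotone. Non-negativity of $f_M^*$ — the property the informal discussion wants — then comes for free: a monotone function with $f_M^*(\emptyset) = 0$ satisfies $f_M^*(S) \ge f_M^*(\emptyset) = 0$ for all $S$.

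The two easy checks I would dispatch first. Normalization is immediate: $f_M^*(\emptyset) = f(\emptyset) + c^*(\emptyset) = 0 + 0 = 0$, using that $f$ is normalized. Submodularity is also routine: $c^*$ is additive, hence modular, hence in particular submodular, and a sum of submodular functions is submodular, so $f_M^* = f + c^*$ is submodular; equivalently, one computes $\big(f_M^*\big)'(e,S) = f'(e,S) + c^*(\{e\})$ and observes that the $c^*(\{e\})$ term cancels when comparing two conditioning sets, so the incremental values of $f_M^*$ inherit the nonincreasing behaviour of those of $f$.

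The only step that genuinely uses submodularity of $f$ is monotonicity of $f_M^*$, and it is cleanest via incremental values. Fix $S \subseteq U$ and $e \in U \setminus S$. Then
\begin{align*}
\big(f_M^*\big)'(e,S) = f'(e,S) + c^*(\{e\}) = f'(e,S) + \big(f(U\setminus\{e\}) - f(U)\big) = f'(e,S) - f'(e, U\setminus\{e\}).
\end{align*}
Since $e \notin S$ we have $S \subseteq U\setminus\{e\}$, so submodularity of $f$ gives $f'(e,S) \ge f'(e, U\setminus\{e\})$, i.e. $\big(f_M^*\big)'(e,S) \ge 0$; as every incremental value of $f_M^*$ is non-negative, $f_M^*$ is monotone, which finishes the verification. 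I do not expect a real obstacle here; the one place I would be careful is to keep the \emph{existence} claim (some monotone submodular $f_M$ and additive $c$ exist) separate from the \emph{explicit} formula, and to prove the properties for the explicit $f_M^*, c^*$ — this particular witness being the one that drives the later approximation and hardness results. I would also note explicitly that we neither need nor can assert that $c^*$ is non-negative, so no sign analysis of $c^*$ is required.
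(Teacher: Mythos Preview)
Your proposal is correct and follows essentially the same approach as the paper's proof: additivity of $c^*$ is by construction, submodularity of $f_M^*$ follows because it is $f$ plus a modular function, and monotonicity is obtained by computing the marginal $\big(f_M^*\big)'(e,S) = f'(e,S) - f'(e, U\setminus\{e\})$ and invoking submodularity of $f$ with $S \subseteq U\setminus\{e\}$. Your extra remarks on normalization and the resulting non-negativity of $f_M^*$ are minor additions not in the paper but entirely in order.
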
\normalsize
\begin{proof} The proof is provided in Appendix A. 
\end{proof}
Since our approximation ratio depends on the decomposition and owing to the importance of the decomposition in Proposition \ref{decomp}, we refer to it as $f_M^*$ and $c^*$.

\subsection{Multi-Query Optimization and UNSM}\label{mqounsm}
We now describe the changes to the MQO formulation of \cite{Roy2000} and show the role submodularity plays in the same. As defined above, \textit{bestCost}$(Q,S)$ includes the cost of computing and materializing the set of PQDAG nodes to be materialized $S$. Consider a scenario where $S$ was already materialized and we just have to find the optimal plan which \textit{may or may not} use the materialized nodes in $S$. However, no further nodes may be chosen to be materialized. The cost of the optimal plan can be thought of as the \textit{best use cost} and the function is thus called \textit{bestUseCost}$(Q,S)$. This function is monotonically decreasing since as more nodes are materialized, we will exploit the additional nodes only if they lead to a reduction in cost. Of course, the cost of materializing $S$ needs to be taken into account and we call that function $c(S)$. Clearly, \textit{bestCost}$(Q,S) = $ \textit{bestUseCost}$(Q,S)+c(S)$. For brevity, we refer to \textit{bestUseCost}$(Q,S)$ as $buc(S)$. 

The MQO problem can be thought of as maximizing the ``materialization-benefit'' $(mb(S) $ for brevity) we get in the plan cost by exploiting common subexpressions over a naive execution plan which is just locally optimal and does not exploit subexpressions.  Clearly the cost of the latter is $bc(\emptyset) = buc(\emptyset)$. Mathematically, $mb(S)$ is defined as
\begin{align*}
mb(S) &= bc(\emptyset) - bc(S) \\
&= buc(\emptyset) - (buc(S) + c(S)) \\
&= (buc(\emptyset) - buc(S)) - c(S) 
\end{align*}
The function in parenthesis in the last line is a monotonically increasing function since $buc(S)$ is a monotonically decreasing function. Also, if the set of materialized nodes $S$ are ``far apart'' in the PQDAG, the cost of computing and materializing a node $e \in S$ can be thought of as being independent of the other nodes in $S$. This motivates us to assume that the $c$ function is additive. Of course, this assumption need not be true. For example, if two of the equivalence nodes in $S$ are just below each other, we can significantly benefit by computing the ``lower'' node and then just reading it from disk to compute the ``upper'' node. As proved in Proposition \ref{decomp}, under the assumption of submodularity, $mb$ can always be decomposed into a difference of monotone, submodular function and an additive function\footnote{The decomposition in Proposition \ref{decomp} does not actually correspond to the cost of materializing nodes but parallels are drawn for intuition}. Observe that \begin{align*}\forall X,\ \forall x \notin X, benefit(x,X) = -bc'(x,X)\end{align*} Thus, the ``monotonicity heuristic'' assumption is essentially that the \textit{bestCost} function is supermodular. This implies that $mb$ is submodular. Note that $mb$ is normalized. Thus, the problem is essentially the UNSM problem with $mb$ as the submodular function.
The reason why materialization benefit for a particular set of nodes may be negative is due to the fact that there may be certain nodes which may have very high materialization cost but may not have high benefit. This is where the algorithm of \cite{Silva2012}, which chooses to materialize every node can be horribly inefficient.
\vspace{-0.8em}
\section{The Marginal Greedy algorithm}\label{NewGreedy}
 In this section, we propose a greedy algorithm for the UNSM problem for which we prove an approximation guarantee in this section. A proof of a matching hardness of approximation, under the assumption of $\p \neq \np$  is presented in the next section.

Given a decomposition of a non-monotone, normalized submodular function $f$, let the monotone submodular and additive functions be denoted by $f_M$ and $c$. Thus, the problem we want to solve is as follows
\begin{align*}
 & \max_{\substack{S \subseteq U}} f(S)
 = \max_{\substack{S \subseteq U}} f_M(S) - c(S)
\end{align*}
The \margG algorithm (Algorithm \ref{margGreedy}) has been proposed before by \cite{DBLP:journals/orl/Sviridenko04}, albeit for non-negative, \textit{monotone} submodular maximization under knapsack constraints. At each iteration, the algorithm greedily selects the element with the highest use-benefit to cost ratio from those elements which satisfy a knapsack constraint. In our case, however, there is no knapsack constraint and instead we add elements as long as it leads to an increase in the value of $f$. We emphasize that the problem in our case is considerably different than this problem and highlight the differences in subsection \ref{approxsubsection}.

\begin{algorithm}
\captionof{algorithm}{\margG Algorithm}\label{margGreedy}
\begin{algorithmic}[ht]
\State $X = \emptyset$
\State $Y =$ Set of shareable equivalence nodes in the DAG 
\While {$Y \neq \emptyset $}
\State Pick $x \in Y$ which maximizes $r(x, X) = \frac{f_M'(x,X)}{c(\{x\})}$
\If{$r(x,X) > 1$}
\State $X=X\cup\{x\}, Y = Y \setminus \{x\}$
\Else
\State $Y=\emptyset$
\EndIf 
\EndWhile\\
\Return $X$
\end{algorithmic}
\end{algorithm}
The \margG algorithm also finally adds all elements with negative $c$ values. This was also done in Sviridenko's case \cite{DBLP:journals/orl/Sviridenko04} as one can only increase the value of the function without increasing the budget. This is fine for us as well and can only raise the value of the function $f$. This is because $f_M$ is monotone so including more elements only raises its value and we are subtracting off some negative $c$ values which can only raise the value of $f$. If the decomposition used is the one given in Proposition \ref{decomp}, we can compute the term in the summation for each element once and store it. This can be done in just $n+1$ $bc(S)$ invocations (for the sets $U$ and for $U \setminus \{e_i\} \ \forall e_i \in V$).

\subsection{Approximation Factor of \margG}\label{approxsubsection}

Let $\Theta$ be an optimal solution. Let $X_i$ denote the set of nodes selected by Algorithm \ref{margGreedy}  just after the $i^{th}$ iteration. \\
Define $\Delta_{f_M}(E,S) = f_M(S\cup E) - f_M(S)$, where $E$ and $S$ are subsets of $U$.

We state the main theorem of this section which mentions the approximation guarantee Algorithm \ref{margGreedy} provides. The approximation factor is not a constant and instead depends on the value of the $f$ and $c$ functions at optimal. 
\begin{theorem}\label{approxFactor}
The answer obtained by the \margG algorithm ($X$) satisfies the following inequality\small
\begin{align*}
f(X) \geq \bigg[1-\frac{c(\Theta)}{f(\Theta)}\ln (1+\frac{f(\Theta)}{c(\Theta)} )\bigg] f(\Theta).
\end{align*}
\end{theorem}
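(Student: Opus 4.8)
The plan is to mimic the classical analysis of the greedy algorithm for submodular maximization under a knapsack constraint (Sviridenko-style), but carefully accounting for the fact that $f = f_M - c$ may take negative values and that there is no a-priori budget: the algorithm stops on its own when the best ratio $r(x,X)$ drops to $1$. The key quantity to track is the deficit $\Delta_{f_M}(\Theta, X_i) = f_M(X_i \cup \Theta) - f_M(X_i)$, i.e. how much monotone value the optimal solution still "has over" the greedy set after $i$ steps. By submodularity of $f_M$ and the greedy choice rule, each step of the algorithm shrinks this deficit by a multiplicative factor governed by the cost $c(\{x_{i+1}\})$ relative to $c(\Theta)$.

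First I would establish the per-step recurrence. Fix iteration $i$ and let $x_{i+1}$ be the element greedily chosen, with ratio $\rho_{i+1} := r(x_{i+1}, X_i) = f_M'(x_{i+1},X_i)/c(\{x_{i+1}\}) \ge 1$. By submodularity and monotonicity of $f_M$, writing $\Theta \setminus X_i = \{o_1,\dots,o_t\}$ and telescoping,
\begin{align*}
\Delta_{f_M}(\Theta, X_i) \;=\; f_M(X_i\cup\Theta) - f_M(X_i) \;\le\; \sum_{o \in \Theta\setminus X_i} f_M'(o, X_i) \;\le\; \rho_{i+1}\sum_{o\in\Theta\setminus X_i} c(\{o\}) \;\le\; \rho_{i+1}\, c(\Theta),
\end{align*}
where the middle inequality uses that $x_{i+1}$ maximizes the ratio, so $f_M'(o,X_i) \le \rho_{i+1} c(\{o\})$ for every candidate $o$, and the last uses additivity of $c$ (and that remaining cost is at most $c(\Theta)$). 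Hence $f_M'(x_{i+1},X_i) = \rho_{i+1} c(\{x_{i+1}\}) \ge \Delta_{f_M}(\Theta,X_i)\, c(\{x_{i+1}\})/c(\Theta)$, which gives $\Delta_{f_M}(\Theta,X_{i+1}) \le \Delta_{f_M}(\Theta,X_i)\big(1 - c(\{x_{i+1}\})/c(\Theta)\big)$, since adding $x_{i+1}$ to $X_i$ raises $f_M(X_i)$ by $f_M'(x_{i+1},X_i)$ while $f_M(X_i\cup\Theta)$ cannot decrease (it actually stays fixed if $x_{i+1}\in\Theta$, and only grows otherwise — either way the bound holds).

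Next I would unroll this recurrence. After the algorithm runs for $k$ iterations selecting $x_1,\dots,x_k$ with total cost $c(X_k) = \sum_{j\le k} c(\{x_j\})$, using $1-z \le e^{-z}$,
\begin{align*}
\Delta_{f_M}(\Theta, X_k) \;\le\; \Delta_{f_M}(\Theta,\emptyset)\prod_{j=1}^{k}\Big(1 - \tfrac{c(\{x_j\})}{c(\Theta)}\Big) \;\le\; f_M(\Theta)\, e^{-c(X_k)/c(\Theta)}.
\end{align*}
Since $f_M(X_k) \ge f_M(X_k\cup\Theta) - \Delta_{f_M}(\Theta,X_k) \ge f_M(\Theta) - \Delta_{f_M}(\Theta,X_k)$, we get $f_M(X_k) \ge f_M(\Theta)\big(1 - e^{-c(X_k)/c(\Theta)}\big)$, and therefore
\begin{align*}
f(X_k) \;=\; f_M(X_k) - c(X_k) \;\ge\; f_M(\Theta)\big(1 - e^{-c(X_k)/c(\Theta)}\big) - c(X_k).
\end{align*}
Now treat the right-hand side as a function of the scalar $\beta := c(X_k)/c(\Theta) \ge 0$, namely $g(\beta) = f_M(\Theta)(1-e^{-\beta}) - \beta\, c(\Theta)$; the termination rule of the algorithm (it adds elements only while the ratio exceeds $1$, and in particular while marginal monotone gain exceeds marginal cost) ensures the final $X$ is "deep enough" that we may lower-bound $f(X)$ by the minimum of $g$ over the relevant range — or more cleanly, optimize over $\beta$: setting $g'(\beta)=0$ gives $e^{-\beta} = c(\Theta)/f_M(\Theta)$, i.e. $\beta^\star = \ln(f_M(\Theta)/c(\Theta))$, and substituting yields $g(\beta^\star) = f_M(\Theta) - c(\Theta) - c(\Theta)\ln(f_M(\Theta)/c(\Theta)) = f(\Theta) - c(\Theta)\ln(1 + f(\Theta)/c(\Theta))$, using $f_M(\Theta) = f(\Theta) + c(\Theta)$. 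Rewriting, $f(X) \ge f(\Theta)\big[1 - \tfrac{c(\Theta)}{f(\Theta)}\ln(1 + \tfrac{f(\Theta)}{c(\Theta)})\big]$, which is exactly the claimed bound.

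**The main obstacle** I anticipate is the last step: justifying that the stopping rule of \margG forces $c(X)$ to be large enough that $g(c(X)/c(\Theta))$ really dominates the worst case $g(\beta^\star)$ — equivalently, ruling out the possibility that the greedy stops "too early" with a small, low-cost $X$ for which $f(X)$ is tiny. One must argue that at every iteration before termination the monotone-deficit $\Delta_{f_M}(\Theta,X_i)$ is still sizable (so $\beta$ keeps growing), and handle the boundary case $f_M(\Theta) \le c(\Theta)$ (where $\beta^\star \le 0$ and the bound is vacuous or trivially implied by $f(X) \ge 0 = f(\emptyset)$, since the greedy never decreases $f$). The negativity of $f$ is what makes this delicate: unlike the monotone knapsack setting, we cannot simply appeal to a cardinality/budget being exhausted, so the argument hinges on the interplay between the ratio-threshold "$>1$" and the recurrence above. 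I would also need the final "add all negative-$c$ elements" post-processing step to be shown harmless, which is immediate since it only increases $f$.
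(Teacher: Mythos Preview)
Your proposal contains a fixable technical slip and one genuine gap---the very one you flag as ``the main obstacle.''

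\textbf{The recurrence step is misstated.} You want $\Delta_{f_M}(\Theta,X_{i+1})\le\Delta_{f_M}(\Theta,X_i)-f_M'(x_{i+1},X_i)$, arguing that ``$f_M(X_i\cup\Theta)$ \dots\ only grows otherwise---either way the bound holds.'' This is backwards: when $x_{i+1}\notin\Theta$ and $f_M(X_i\cup\Theta)$ grows, the deficit shrinks by only $f_M'(x_{i+1},X_i)-f_M'(x_{i+1},X_i\cup\Theta)$, which is in general \emph{strictly less} than $f_M'(x_{i+1},X_i)$. (The easy fix is to track $D_i:=f_M(\Theta)-f_M(X_i)$ instead, for which $D_{i+1}=D_i-f_M'(x_{i+1},X_i)$ holds exactly.)

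\textbf{The ``optimize over $\beta$'' step does not close.} Even granting the recurrence, you obtain $f(X_l)\ge g(\beta_l)$ for the \emph{particular} value $\beta_l=c(X_l)/c(\Theta)$ realized by the algorithm, and then replace $\beta_l$ by the maximizer $\beta^\star$ of the concave function $g$. But $g(\beta_l)\le g(\beta^\star)$ for every $\beta_l$, so the inequality points the wrong way. Nothing in your argument pins down $c(X_l)$, and your termination observation only yields $f_M(X_l)\ge f(\Theta)$, which gives no control over $c(X_l)$. This is not cosmetic: the whole difficulty here is precisely that the algorithm has no external budget to hit.

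The paper sidesteps this by \emph{changing the progress variable} from cost to $f_M$. It writes $f(X_l)=\sum_i\alpha(X_i)\,\delta(f_M(X_i))$ where $\alpha(X_i)=1-1/\rho_{i+1}$ is the rate of increase of $f$ per unit increase of $f_M$, and bounds $\alpha(X_i)\ge 1-c(\Theta)/(f_M(\Theta)-f_M(X_i))$ via the same averaging argument (Lemma~\ref{structuralLemma} and its corollary). The crucial extra fact is that whenever $f_M(X_i)<f(\Theta)$ there is an element of $\Theta\setminus X_i$ with ratio strictly above~$1$ (since then $\Delta_{f_M}(\Theta,X_i)>c(\Theta)$), so the greedy \emph{cannot stop} before $f_M$ reaches $f(\Theta)$. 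The Riemann sum then dominates $\int_0^{f(\Theta)}\bigl(1-c(\Theta)/(f_M(\Theta)-u)\bigr)\,du$, and evaluating this integral gives the stated bound directly---no free parameter to optimize over. This reparametrization is the idea your sketch is missing.
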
\normalsize
We prove the theorem after presenting a lemma and its corollary which are central to the proof. At a high level, the lemma states that upto a certain point in the execution of the  algorithm, there exists an element that can be picked and has a marginal-benefit to cost ratio which is at least the marginal-benefit to cost ratio we would get if we picked all remaining elements in the optimal solution.
\begin{lemma}\label{structuralLemma}
At any iteration $i + 1 < n$ in the execution of the \margG algorithm, if $f_M(X_i) < f(\Theta)$, then there exists some element $e \in \Theta \setminus X_i$ that satisfies
\begin{align*}
\frac{\Delta_{f_M}(\{e\},X_i)}{c(\{e\})} \geq \frac{\Delta_{f_M}(\Theta,X_i)}{c(\Theta)}.
\end{align*}
\end{lemma}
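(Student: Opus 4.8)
The plan is to argue by contradiction. Suppose that at iteration $i+1 < n$ we have $f_M(X_i) < f(\Theta)$, yet every element $e \in \Theta \setminus X_i$ violates the claimed inequality, i.e.
\begin{align*}
\frac{\Delta_{f_M}(\{e\},X_i)}{c(\{e\})} < \frac{\Delta_{f_M}(\Theta,X_i)}{c(\Theta)} \qquad \text{for all } e \in \Theta \setminus X_i.
\end{align*}
Writing $\rho = \Delta_{f_M}(\Theta,X_i)/c(\Theta)$ for the right-hand ratio, this says $\Delta_{f_M}(\{e\},X_i) < \rho\, c(\{e\})$ for each such $e$. Summing this inequality over all $e \in \Theta \setminus X_i$ and using additivity of $c$ (so that $\sum_{e \in \Theta \setminus X_i} c(\{e\}) = c(\Theta \setminus X_i) \le c(\Theta)$, assuming the $c$-values involved here are the nonnegative ones still in play before the final clean-up step) gives
\begin{align*}
\sum_{e \in \Theta \setminus X_i} \Delta_{f_M}(\{e\},X_i) \;<\; \rho \sum_{e \in \Theta \setminus X_i} c(\{e\}) \;\le\; \rho\, c(\Theta) \;=\; \Delta_{f_M}(\Theta,X_i).
\end{align*}

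The next step is the submodularity argument that contradicts this. Since $f_M$ is submodular, the marginal gains are subadditive when the elements are added one at a time on top of $X_i$: ordering $\Theta \setminus X_i = \{a_1,\dots,a_t\}$ and telescoping, $\Delta_{f_M}(\Theta \setminus X_i, X_i) = \sum_{j} f_M'(a_j, X_i \cup \{a_1,\dots,a_{j-1}\})$, and each term is at most $f_M'(a_j, X_i) = \Delta_{f_M}(\{a_j\}, X_i)$ by the diminishing-returns property. Noting that $\Delta_{f_M}(\Theta, X_i) = \Delta_{f_M}(\Theta \setminus X_i, X_i)$ (adding the elements of $X_i \cap \Theta$ to $X_i$ changes nothing), this yields
\begin{align*}
\Delta_{f_M}(\Theta, X_i) \;\le\; \sum_{e \in \Theta \setminus X_i} \Delta_{f_M}(\{e\}, X_i),
\end{align*}
which directly contradicts the strict inequality derived above. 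Hence some $e \in \Theta \setminus X_i$ must satisfy the desired bound, and since $r(e,X_i) = f_M'(e,X_i)/c(\{e\}) = \Delta_{f_M}(\{e\},X_i)/c(\{e\})$, this $e$ is a legitimate candidate for the greedy step.

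One subtlety I would want to pin down carefully — and I expect this to be the main obstacle — is the handling of the cost function's sign and the role of the hypothesis $f_M(X_i) < f(\Theta)$ together with $i+1 < n$. The clean summation argument needs the $c(\{e\})$ for $e \in \Theta \setminus X_i$ to be positive (negative-cost elements are swept in only at the end of \margG and are not what the ratio rule governs), and one must confirm that $\Theta \setminus X_i$ is nonempty — otherwise the statement is vacuous. The assumption $f_M(X_i) < f(\Theta) = f_M(\Theta) - c(\Theta) \le f_M(\Theta)$ forces $X_i \ne \Theta$, hence (using monotonicity of $f_M$, so $X_i \subsetneq$ something of strictly larger $f_M$-value) guarantees $\Theta \not\subseteq X_i$, so $\Theta \setminus X_i \ne \emptyset$; and the bound $i+1 < n$ ensures the greedy loop has not exhausted $Y$, so there is still room to pick such an $e$. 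I would state these observations explicitly at the start of the proof so that the contradiction step goes through cleanly, and I would double-check that the direction of the submodular inequality ($f_M'(u,A) \ge f_M'(u,B)$ for $A \subseteq B$) is applied with $A = X_i$ and $B = X_i \cup \{a_1,\dots,a_{j-1}\}$, which is exactly the diminishing-returns direction we need.
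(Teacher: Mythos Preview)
Your proposal is correct and follows essentially the same contradiction-plus-submodularity route as the paper: assume the ratio inequality fails for every candidate, sum, and contradict the subadditivity bound $\Delta_{f_M}(\Theta,X_i)\le\sum_e\Delta_{f_M}(\{e\},X_i)$. The only notable difference is that the paper sums over all $e\in\Theta$ (so that $\sum c(\{e\})=c(\Theta)$ exactly) and then argues afterwards that the witness must lie outside $X_i$, whereas you sum over $\Theta\setminus X_i$ and need the extra step $c(\Theta\setminus X_i)\le c(\Theta)$; the paper also separates the degenerate case $\Delta_{f_M}(\Theta,X_i)=0$ explicitly, which your argument handles implicitly.
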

\begin{proof} 
Firstly, note that if \\
$f_M(X_i) < f(\Theta) = f_M(\Theta) - c(\Theta) \leq f_M(\Theta)$, then \\ $ \Theta \setminus X_i \neq \emptyset$. This is because $f_M$ is monotonically increasing. Also, note that if $S$ is fixed, $\Delta_{f_M}(E,S)$ is a submodular function in $E$, due to submodularity of $f_M$. 

We consider two cases. Since the $f_M$ function is monotonically increasing, the numerators on both sides of the inequality are non-negative.

\textbf{Case 1.} $\Delta_{f_M}(\Theta,X_i) = 0$\\
In this case, the RHS of the inequality is 0. Since the $f_M$ function is monotonically increasing, $\forall e' \in \Theta \setminus X_i$, we have $\frac{\Delta_{f_M}(e',X_i)}{c(\{e'\})} \geq \frac{\Delta_{f_M}(\Theta,X_i)}{c(\Theta)}$. Since $\Theta \setminus X_i \neq \emptyset$, any element $e' \in \Theta \setminus X_i$ satisfies the required inequality.

\textbf{Case 2.} $\Delta_{f_M}(\Theta,X_i) > 0$\\
We first show that there exists some element $e \in \Theta$ for which the inequality holds.
Assume the contradiction, i.e., 
\begin{align*} 
\forall e \in \Theta, & \frac{\Delta_{f_M}(\{e\},X_i)}{c(e)} < \frac{\Delta_{f_M}(\Theta,X_i)}{c(\Theta)}.\end{align*}\vspace{-1em}
\begin{align*}
  \therefore c(e)(\Delta_{f_M}(\Theta,X_i) ) > c(\Theta)(\Delta_{f_M}(\{e\},X_i)).
 \end{align*}

  Summing up over all $ e \in \Theta$, we get
  \small
  \begin{align*}& \sum\limits_{e \in \Theta}c(e)(\Delta_{f_M}(\Theta,X_i)) > \sum\limits_{e \in \Theta}c(\Theta)(\Delta_{f_M}(\{e\},X_i) )\\
  \implies& (\Delta_{f_M}(\Theta,X_i) ) \sum\limits_{e \in \Theta}c(e) > c(\Theta)\sum\limits_{e \in \Theta}(\Delta_{f_M}(\{e\},X_i) )\\
  \implies& (\Delta_{f_M}(\Theta,X_i) ) c(\Theta) > c(\Theta)\sum\limits_{e \in \Theta}(\Delta_{f_M}(\{e\},X_i) )\\
  \implies& \Delta_{f_M}(\Theta,X_i)  > \sum\limits_{e \in \Theta}(\Delta_{f_M}(\{e\},X_i)).
  \end{align*}
  \normalsize
  Since $X_i$ is fixed, from our earlier observation, $\Delta_{f_M}(E,X_i)$ is a submodular function in $E$. 
  Thus, we have
\begin{align*}
  \Delta_{f_M}(\Theta,X_i)  \leq \sum\limits_{e \in \Theta}(\Delta_{f_M}(\{e\},X_i)).
\end{align*}
This leads to a contradiction. Thus, there exists some element $e' \in \Theta$ for which the required inequality holds. 

Now, observe that the RHS of the required inequality in this case is strictly positive and $\forall e \in X_i$, the LHS of the inequality is 0. Hence, $e' \notin X_i$ and we are done.
\end{proof}
\vspace{-0.5em}
\begin{corollary}\label{structuralCorollary}
When the conditions of Lemma \ref{structuralLemma} hold, 
\begin{center}
$\frac{\Delta_{f_M}(\{e\},X_i) - c(\{e\})}{\Delta_{f_M}(\{e\},X_i)} \geq \frac{\Delta_{f_M}(\Theta,X_i) - c(\Theta)}{\Delta_{f_M}(\Theta,X_i)}.$
\end{center}

\end{corollary}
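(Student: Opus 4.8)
The plan is to derive the corollary as a purely algebraic rearrangement of the inequality furnished by Lemma~\ref{structuralLemma}, applied to the very same element $e \in \Theta \setminus X_i$ that the lemma produces. The guiding observation is that each side of the claimed inequality is of the form $1 - \frac{c(\cdot)}{\Delta_{f_M}(\cdot,X_i)}$: the left-hand side equals $1 - \frac{c(\{e\})}{\Delta_{f_M}(\{e\},X_i)}$ and the right-hand side equals $1 - \frac{c(\Theta)}{\Delta_{f_M}(\Theta,X_i)}$. Hence, after cancelling the $1$'s and flipping sign, the corollary is exactly equivalent to $\frac{c(\{e\})}{\Delta_{f_M}(\{e\},X_i)} \le \frac{c(\Theta)}{\Delta_{f_M}(\Theta,X_i)}$, which is precisely the reciprocal of the inequality in Lemma~\ref{structuralLemma}.

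So the only genuine work is to verify that all the quantities involved are strictly positive, so that dividing by them and taking reciprocals is legitimate; this is the step I would treat carefully. First, monotonicity of $f_M$ gives $\Delta_{f_M}(E,X_i) \ge 0$ for every $E$. Next, the hypothesis $f_M(X_i) < f(\Theta)$ together with $f(\Theta) = f_M(\Theta) - c(\Theta)$ and monotonicity yields $\Delta_{f_M}(\Theta,X_i) = f_M(\Theta \cup X_i) - f_M(X_i) \ge f_M(\Theta) - f_M(X_i) > c(\Theta) \ge 0$, using non-negativity of the cost $c(\Theta)$; in particular $\Delta_{f_M}(\Theta,X_i) > 0$ (so the degenerate case $\Delta_{f_M}(\Theta,X_i)=0$ is automatically excluded under the conditions of Lemma~\ref{structuralLemma}) and the ratio $\frac{\Delta_{f_M}(\Theta,X_i)}{c(\Theta)} > 1$. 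Feeding this into Lemma~\ref{structuralLemma} forces $\frac{\Delta_{f_M}(\{e\},X_i)}{c(\{e\})} > 1$; since the numerator is non-negative this rules out $c(\{e\}) \le 0$ (elements of non-positive cost are in any case appended at the end of the \margG algorithm and never chosen in the ratio step), and therefore $\Delta_{f_M}(\{e\},X_i) > c(\{e\}) > 0$ as well.

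With positivity in hand I would finish in one line: take reciprocals of $\frac{\Delta_{f_M}(\{e\},X_i)}{c(\{e\})} \ge \frac{\Delta_{f_M}(\Theta,X_i)}{c(\Theta)}$, which reverses the inequality to $\frac{c(\{e\})}{\Delta_{f_M}(\{e\},X_i)} \le \frac{c(\Theta)}{\Delta_{f_M}(\Theta,X_i)}$, then multiply by $-1$ and add $1$ to both sides to obtain exactly the stated inequality. I expect the sign/degeneracy bookkeeping of the middle paragraph to be the only subtle point; the algebraic manipulation itself is routine.
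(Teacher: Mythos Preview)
Your proposal is correct and follows essentially the same route as the paper: the paper's proof also just invokes Lemma~\ref{structuralLemma} and passes from $\frac{\Delta_{f_M}(\{e\},X_i)}{c(\{e\})} \ge \frac{\Delta_{f_M}(\Theta,X_i)}{c(\Theta)}$ to the stated inequality in one line, citing monotonicity of $f_M$. Your careful positivity bookkeeping (in particular, the observation that the hypothesis $f_M(X_i) < f(\Theta)$ forces $\Delta_{f_M}(\Theta,X_i) > c(\Theta) \ge 0$, so Case~1 of the lemma cannot occur) is more explicit than what the paper writes, but the argument is the same.
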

\begin{proof}
From Lemma \ref{structuralLemma}, we have
\begin{center}
$ \frac{\Delta_{f_M}(\{e\},X_i)}{c(\{e\})} \geq \frac{\Delta_{f_M}(\Theta,X_i)}{c(\Theta)}.$
\end{center}
Since $f_M$ is monotonically increasing, it implies
\begin{center}
$\frac{\Delta_{f_M}(\{e\},X_i) - c(\{e\})}{\Delta_{f_M}(\{e\},X_i)}  \geq \frac{\Delta_{f_M}(\Theta,X_i) - c(\Theta)}{\Delta_{f_M}(\Theta,X_i)},$
\end{center}
and we are done.
\end{proof}

\begin{proof} (of Theorem \ref{approxFactor})
Say the \margG algorithm runs for $l \leq n$ iterations. Define $\alpha(X_i)$ to be the rate of increase of $f$ with respect to $f_M$ just after the $i^{th}$ iteration (and thus the current chosen set of elements is $X_i$). Further, let $e \in U \setminus X_i$ be the next element that will be chosen by the \margG algorithm. Note that $e$ is actually a function of $X_i$ and, thus, once $X_i$ is fixed, so is $e$. Mathematically, 
\small
\begin{displaymath}
\alpha(X_i) = \frac{f(X_i\cup\{e\}) - f(X_i)}{\delta(f_M(X_i))}
\end{displaymath}
where $\delta(f_M(X_i)) = f_M(X_i\cup\{e\}) - f_M(X_i).$
 
 Let $j \leq l$ be the maximal index such that $f_M(X_j) < f(\Theta)$. The rate of increase at iteration $i$ of the algorithm is at least as large as choosing the element from $\Theta \setminus X_i$ with the rate presented in LHS of Corollary \ref{structuralCorollary}. 

The corollary also implies that while $f_M(X_i) < f(\Theta)$, the greedy algorithm has an element that it can pick. This implies that $j < l$. Thus, we have
\begin{center}
$f(X_l) = \sum_{i=0}^{l-1}\alpha(X_i)\delta(f_M(X_i)).$
\end{center}

Using Corollary \ref{structuralCorollary},
\small
\begin{align*}
f(X_l)&\geq  \sum_{i=0}^{l-1}\bigg(\frac{f_M(\Theta) - f_M(X_i)-c(\Theta)}{f_M(\Theta)-f_M(X_i)}\bigg)\delta(f_M(X_i)) \\
 &\geq  \sum_{i=0}^{l-1}\bigg(1 - \frac{c(\Theta)}{f_M(\Theta)-f_M(X_i)}\bigg)\delta(f_M(X_i)).
\end{align*}
\normalsize

Since the term in the parenthesis in the last line is a decreasing function of $f_M(X_i)$, we get
\small
\begin{align*}
f(X_l) \geq & \int\limits_{0}^{f_M(X_l)}\bigg(1 - \frac{c(\Theta)}{f_M(\Theta)-u}\bigg)du\\
\geq &  \int\limits_{0}^{f(\Theta)}\bigg(1 - \frac{c(\Theta)}{f_M(\Theta)-u}\bigg)du\\
= & \bigg[u + c(\Theta)\ln(f_M(\Theta) - u) \bigg]^{f(\Theta)}_{0}\\
= & f(\Theta) + c(\Theta)\ln\bigg(\frac{f_M(\Theta)-f(\Theta)}{f_M(\Theta)}\bigg)\\
= & f(\Theta) + c(\Theta)\ln\bigg(\frac{c(\Theta)}{f(\Theta)+c(\Theta)}\bigg)\\
= & f(\Theta) - c(\Theta)\ln\bigg(\frac{c(\Theta)+f(\Theta)}{c(\Theta)})\bigg)\\
= & f(\Theta) - c(\Theta)\ln\bigg(1+\frac{f(\Theta)}{c(\Theta)})\bigg)\\
= & \bigg[1 - \frac{c(\Theta)}{f(\Theta)}\ln\bigg(1+\frac{f(\Theta)}{c(\Theta)}\bigg) \bigg]f(\Theta).
\end{align*}

This concludes our proof and gives us our required approximation factor of 
\small$
\bigg[1 - \frac{c(\Theta)}{f(\Theta)}\ln\bigg(1+\frac{f(\Theta)}{c(\Theta)}\bigg)\bigg].
$\end{proof}
\vspace{-0.4em}
Since the approximation ratio depends on the decomposition (specifically the function $c$), it is natural to ask whether different decompositions can lead to different solutions and approximation ratios. This is indeed the case; given a decomposition $f_M$ and $c$, we can add a positive linear function $d(S) = \sum_{i \in S} d_i$ to both $f_M$ and $c$, we still have a valid decomposition and the approximation factor has become smaller. This is because $f(\Theta)$ is fixed but $c(\Theta)$ becomes larger and clearly, the ratio is a decreasing function of $c$. Since this is the case, one may ask what is the ``best'' decomposition for this problem? We now show that the decomposition in Proposition \ref{decomp}, $f_M^*$ and $c^*$, is indeed the best decomposition. This is done by first improving the ratio for an arbitrary decomposition and then showing that the improvement procedure for $f_M^*$ and $c^*$ does not lead to any improvement. In fact, in the next section, we will show a hardness of approximation which matches the ratio provided by this decomposition.

First we show how to obtain from an arbitrary decomposition $f_M$ and $c$, another decomposition $\widetilde{f}_M$ and $\widetilde{c}$ such that the ratio improves. This happens if we can subtract a linear term from $f_M$ and $c$ while preserving monotonicity of $f_M$ based on the above argument. And then we show that for $f_M^*$ and $c^*$, this improvement procedure returns $f_M^*$ and $c^*$
\begin{proposition} Given an arbitrary decomposition $f_M$ and $c$ of a normalized submodular function $f$, i.e., $f(S) = f_M(S) - c(S) \ \forall \ S \subseteq V$ with monotone $f_M$ and consider another decomposition 
\small
\begin{align*}
\widetilde{f}_M (S) &= f_M (S) - \sum_{i \in S} \big(f_M (U) - f_M (U \setminus i) \big) \\
\widetilde{c} (S) &= c(S) - \sum_{i \in S} \big(f_M (U) - f_M (U \setminus i) \big)
\end{align*}
\normalsize
Then, $\widetilde{f}_M$ is monotone. Furthermore, for the decomposition in Proposition \ref{decomp}, $f^*_M$ and $c^*$, $\widetilde{f}^*_M = f^*_M$ and $\widetilde{c}^* = c^*$. 
\end{proposition}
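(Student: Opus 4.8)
The plan is to dispatch the three obligations in turn: that $(\widetilde{f}_M,\widetilde{c})$ is again a legitimate decomposition, that $\widetilde{f}_M$ is monotone, and that the procedure is a fixed point at $(f^*_M,c^*)$. The first is pure bookkeeping. Writing $\ell(S)=\sum_{i\in S}\big(f_M(U)-f_M(U\setminus i)\big)$, which is a modular (additive) function with $\ell(\emptyset)=0$, we have $\widetilde{f}_M(S)-\widetilde{c}(S)=\big(f_M(S)-\ell(S)\big)-\big(c(S)-\ell(S)\big)=f_M(S)-c(S)=f(S)$; moreover $\widetilde{c}=c-\ell$ is a difference of additive functions, hence additive, and $\widetilde{f}_M=f_M-\ell$ is submodular (submodular minus modular) and normalized (since $f$, and therefore $f_M=f+c$ with $c$ additive, is normalized, and $\ell(\emptyset)=0$). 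So only monotonicity and the fixed-point claim carry content.

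For monotonicity of $\widetilde{f}_M$ I would show each single-element marginal is nonnegative and then telescope. For $j\notin S$,
\[
\widetilde{f}_M(S\cup\{j\})-\widetilde{f}_M(S)=f_M'(j,S)-\big(f_M(U)-f_M(U\setminus\{j\})\big)=f_M'(j,S)-f_M'(j,U\setminus\{j\}).
\]
Since $j\notin S$ we have $S\subseteq U\setminus\{j\}$, so submodularity of $f_M$ (applied with $A=S$, $B=U\setminus\{j\}$, $u=j$) gives $f_M'(j,S)\ge f_M'(j,U\setminus\{j\})$, whence the marginal is $\ge 0$. Summing these marginals along any chain from $S$ to $T\supseteq S$ yields $\widetilde{f}_M(S)\le\widetilde{f}_M(T)$, i.e.\ $\widetilde{f}_M$ is monotone.

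For the fixed-point claim the one computation that matters is $f^*_M(U)-f^*_M(U\setminus\{i\})$ for the decomposition of Proposition \ref{decomp}. Substituting $f^*_M(S)=f(S)+\sum_{e\in S}\big(f(U\setminus\{e\})-f(U)\big)$, the two sums over $U$ and over $U\setminus\{i\}$ differ only in the $e=i$ term $f(U\setminus\{i\})-f(U)$, so
\[
f^*_M(U)-f^*_M(U\setminus\{i\})=\big(f(U)-f(U\setminus\{i\})\big)+\big(f(U\setminus\{i\})-f(U)\big)=0 .
\]
Hence the correction $\ell$ vanishes identically for $(f^*_M,c^*)$, giving $\widetilde{f}^*_M=f^*_M$ and $\widetilde{c}^*=c^*$.

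I do not anticipate a genuine obstacle: the only points requiring care are applying the submodularity inequality with the correct containment $S\subseteq U\setminus\{j\}$, and inferring monotonicity for all pairs $S\subseteq T$ (not just one-element extensions) by the telescoping sum. Everything else is direct substitution into the formulas of Proposition \ref{decomp}.
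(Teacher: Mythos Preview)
Your proposal is correct and follows essentially the same route as the paper: both establish monotonicity by checking that the single-element marginal $\widetilde{f}_M(S\cup\{j\})-\widetilde{f}_M(S)=f_M'(j,S)-f_M'(j,U\setminus\{j\})\ge 0$ via submodularity of $f_M$, and both verify the fixed-point claim by direct substitution showing $f_M^*(U)-f_M^*(U\setminus\{i\})=0$ so that the correction term vanishes. Your additional bookkeeping (that $\widetilde{f}_M-\widetilde{c}=f$ and that $\widetilde{c}$ remains additive) and the explicit telescoping remark are fine extras the paper leaves implicit.
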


\begin{proof} The proof is provided in Appendix A.
\end{proof}

We now remark on certain aspects of the algorithm and its analysis. Since the algorithm is inspired by \cite{DBLP:journals/orl/Sviridenko04}, one may ask whether running that algorithm for multiple values of the budget in the knapsack constraint leads to the same answer. Indeed, this is the case with budget being the value of $c(\Theta)$. However, since we do not apriori know $c(\Theta)$, we would have to potentially try out a large number of budget values which is not feasible. Furthermore, our analysis of the approximation ratio crucially uses the fact that we are actually running the algorithm on this decomposition of $f$ in order to maximize $f$ itself and not maximizing a monotone submodular function subject to knapsack constraints.

\vspace{-1em}
\section{Inapproximability of UNSM}\label{Hardness}
In this section, we prove a hardness of approximation result for the UNSM problem which matches the approximation factor given by the \margG algorithm in Theorem \ref{approxFactor} when the decomposition used is $f_M^*$ and $c^*$ as defined in Proposition \ref{decomp}.
\begin{theorem}\label{hardnessThm} For any $\varepsilon > 0$, it is NP-hard to approximate the unconstrained, normalized submodular maximization problem to a factor of at least \small\begin{align*}\bigg(1 - \frac{\ln(1+\gamma)}{\gamma}+\varepsilon\bigg).\end{align*}\normalsize Here, $\gamma = \frac{f(\Theta)}{c^*(\Theta)}$ and $\Theta$ is an optimal solution to the unconstrained, normalized submodular maximization problem.
\end{theorem}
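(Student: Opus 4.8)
The plan is a gap‑preserving reduction from the inapproximability of maximum coverage (Feige's $1-1/e$ hardness and its $\p\neq\np$ strengthenings) to UNSM, realizing the target objective as a coverage function minus a \emph{uniform} modular cost. First I would fix $\varepsilon>0$ and the parameter $\gamma$, and compose a hard maximum‑coverage instance with a Feige‑style partition system on a ground set $B$, $|B|=m$, whose blocks partition $B$ into $k$ parts. This produces sets $S_1,\dots,S_N\subseteq B$ with the usual two‑sided guarantee: in a YES instance some $k$ of the $S_i$ partition $B$, while in a NO instance, for every $j$ (over the relevant range), any $j$ of the $S_i$ cover at most $\big(1-(1-1/k)^{j}+o_k(1)\big)m$ points, the $o_k(1)$ being uniform in $j$. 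I would then set $U=\{1,\dots,N\}$, duplicate sets so every point of $B$ lies in at least two of them, fix the uniform value $\lambda:=\frac{m}{k(1+\gamma)}$, and take $f(T)=\big|\bigcup_{i\in T}S_i\big|-\lambda|T|$. This $f$ is normalized, submodular (coverage is submodular, $-\lambda|T|$ modular) and non‑monotone; moreover $f(U\setminus\{e\})-f(U)=\lambda$ for every $e$, so the canonical decomposition of Proposition \ref{decomp} gives exactly $c^*(T)=\lambda|T|$ and $f_M^*(T)=\big|\bigcup_{i\in T}S_i\big|$, which is what makes $\gamma$ come out right.

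Next I would establish the gap. In the YES case, taking $\Theta$ to be the $k$ partitioning sets gives $f(\Theta)=m-k\lambda=k\lambda\gamma$ and $c^*(\Theta)=k\lambda$, so $f(\Theta)/c^*(\Theta)=\gamma$ as the statement demands; one must then check that $\Theta$ is genuinely optimal, which follows from the partition‑system inequality applied to any solution mixing the honest partition with blocks of other partitions, together with unimodality of $x\mapsto x(m/k-\lambda)$ and of $g$ below. In the NO case, any $T$ with $|T|=j$ satisfies
\[
f(T)\ \le\ \Big(1-(1-\tfrac1k)^{\,j}\Big)m-j\lambda+o_k(1)\cdot m,
\]
and relaxing $j$ to a real $x$, the concave function $g(x)=\big(1-(1-1/k)^{x}\big)m-x\lambda$ is (as $k\to\infty$, so that $(1-1/k)^x\to e^{-x/k}$) maximized at $x^*=k\ln\frac{m}{k\lambda}=k\ln(1+\gamma)$ with value $g(x^*)=m-k\lambda-k\lambda\ln(1+\gamma)=k\lambda\gamma\big(1-\frac{\ln(1+\gamma)}{\gamma}\big)$. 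Hence for $k$ large enough the NO‑optimum is at most $\big(1-\frac{\ln(1+\gamma)}{\gamma}+\varepsilon\big)\,k\lambda\gamma$, i.e.\ at most $\big(1-\frac{\ln(1+\gamma)}{\gamma}+\varepsilon\big)f(\Theta)$.

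Finally, an algorithm approximating UNSM to a factor of at least $1-\frac{\ln(1+\gamma)}{\gamma}+\varepsilon$ would, on a YES instance, return a set of value strictly exceeding the NO‑optimum, and so would decide the underlying NP‑hard promise problem; therefore no such algorithm exists unless $\p=\np$. The step I expect to be the crux is the NO‑case coverage bound holding \emph{simultaneously for all subset sizes} $j$ with error uniformly $o_k(1)$ up to $x^*=k\ln(1+\gamma)$ — precisely the anti‑concentration property of the partition system — so the number of partitions and the base coverage instance must be chosen so that the strengthened ($\p\neq\np$) form of Feige's bound applies throughout that range; a secondary technical point is ruling out, in the YES instance, solutions combining the honest partition with other blocks, which again reduces to the same partition‑system inequality.
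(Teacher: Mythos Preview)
Your approach is essentially the same as the paper's: both reduce from \textsf{Max Coverage} hardness, build $f$ as coverage minus a uniform modular cost, verify that in the YES case the optimal has the prescribed ratio $\gamma$, and in the NO case maximize the upper bound over the number of chosen sets to land at $\beta=\ln(1+\gamma)$. Your scaling $\lambda=m/(k(1+\gamma))$ is exactly the paper's $c(\mathcal{A})=\tfrac{1}{\gamma}\tfrac{|\mathcal{A}|}{l}$ after normalizing coverage to $[0,1]$.

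The one substantive difference is how the ``for all subset sizes'' soundness bound is obtained. You propose to rebuild it directly from Feige's partition systems, and you correctly flag as the crux that the coverage bound $(1-(1-1/k)^{j}+o_k(1))m$ must hold uniformly over $j$ up to $k\ln(1+\gamma)$. The paper sidesteps this entirely by invoking the Krishnaswamy--Sviridenko separation theorem (their Theorem~\ref{betaSep}) as a black box: that result already packages exactly the statement that no polynomial algorithm can, for \emph{any} $\beta\in[0,B]$, pick $\beta l$ sets covering $(1-e^{-\beta}+\varepsilon)n$ elements. Using it, the soundness calculation becomes a two-line optimization in $\beta$ with no partition-system bookkeeping. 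Your route is valid but strictly more work; citing \cite{Krishnaswamy2012} would let you delete the paragraph you are most worried about.

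On the $c^{*}=c$ identification: you handle it by duplicating sets so every element is multiply covered, whereas the paper observes that the hard instances inherited from the Lund--Yannakakis construction already have this property (no element is covered by a unique set), so $f_M(U\setminus\{i\})=f_M(U)$ and hence $c^{*}(\Theta)=c(\Theta)$. Either device works. Your concern about certifying that $\Theta$ is \emph{genuinely} optimal in the YES case is legitimate and is glossed over in the paper; it follows from the fact that in these instances no fewer than $l$ sets cover $X$, together with the observation that adding sets beyond full coverage only decreases $f$.
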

This approximation factor depends on the value at optimal (which may go to 0), implying that a constant factor approximation to the UNSM problem is unlikely. 

Before proving Theorem \ref{hardnessThm}, we first present a separation result of the \textsf{Max Coverage} problem which is central to the proof of Theorem \ref{hardnessThm}.
\subsection{Inapproximabili{}ty of \textsf{Max Coverage}}
An instance $\mathcal{I} = (X, \mathcal{S})$ of the \textsf{Set Cover} problem is defined as follows: we are given the ground set $X = \{e_1, e_2, \ldots, e_n\}$ and $\mathcal{S} = \{S_1, S_2, \ldots, S_m\} \subseteq 2^X$. The goal is to choose the minimum number of sets $\mathcal{O} \subseteq \mathcal{S}$ such that $\bigcup \limits_{S_i \in \mathcal{O}}S_i = X$. 
Feige \cite{Feige1998} showed that for any $\varepsilon > 0$, there is no $(1 - \varepsilon) \ln n$-approximation polynomial time algorithm for this problem unless $\np \subseteq \mathsf{DTIME}(n^{\textit{O}(\log\log n)})$. The hardness was later proved under the weaker assumption of $\p \neq \np$ by \cite{Moshkovitz2015,Dinur:2014:AAP:2591796.2591884}.

A problem closely related to the \textsf{Set Cover} problem is the \textsf{Max Coverage} problem. An instance of the \textsf{Max Coverage} problem consists of an instance $\mathcal{I} = (X, \mathcal{S}, l)$ where $X$ is the ground set, $\mathcal{S}$ is a collection of subsets of $X$, and $l \leq m$ is an integer specifying the budget. The goal is to select $l$ sets $S_{i_1}, S_{i_2} \ldots, S_{i_l}$ and cover as many elements of the ground set as possible. Feige \cite{Feige1998} shows that it is $\np$-hard to approximate this problem to  a factor better than $1 - 1/e$. 

Krishnaswamy and Sviridenko \cite{Krishnaswamy2012} prove a separation result (which is an extension of the \textsf{Max Coverage} hardness stated above) which is of interest to us. 
 \vspace{-1mm}
\begin{theorem}\label{betaSep} (Theorem 2.2 in \cite{Krishnaswamy2012}) Suppose there exists a polynomial algorithm, which for some constants $B \geq 1$ and $0 < \varepsilon < e^{-B}$  has the following property : Given any instance $(X,\mathcal{S}, l)$ of \textsf{Max Coverage} with optimal value equal to $|X|$ (i.e., there exist $l$ sets that cover the ground set $X$ completely), the algorithm picks a collection of $\beta l$ sets for some $\beta \in [0,B]$ which can cover $(1 - e^{-\beta}+\varepsilon)n$ elements. Then $\p = \np$. Note that we allow the algorithm to pick different values of $\beta$ for different instances of the problem.
\end{theorem}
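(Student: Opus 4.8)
\emph{Proof proposal.} The plan is to show that such an algorithm would decide, in polynomial time, the gap promise problem for \textsf{Max Coverage} underlying Feige's theorem, which is $\np$-hard under $\p\neq\np$; hence $\p=\np$.

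The first step is to record a ``whole curve'' strengthening of Feige's hardness \cite{Feige1998}, upgraded to the assumption $\p\neq\np$ via the PCP improvements of \cite{Moshkovitz2015,Dinur:2014:AAP:2591796.2591884} exactly as was done above for \textsf{Set Cover}. Namely: for every constant $B\geq 1$ and every $\delta>0$, it is $\np$-hard to distinguish instances $(X,\mathcal{S},l)$ of \textsf{Max Coverage} (write $n=|X|$) of the two kinds --- \textbf{Yes}: there exist $l$ sets in $\mathcal{S}$ whose union is $X$; \textbf{No}: for every integer $0\leq q\leq Bl$, every subcollection of $q$ sets covers at most $\bigl(1-e^{-q/l}+\delta\bigr)n$ elements of $X$. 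This is not literally the textbook statement, which fixes the number of chosen sets, but it follows from the same construction: in the soundness analysis the fraction of $X$ covered by an \emph{arbitrary} collection of $q$ sets is bounded, via the underlying partition system together with the soundness of the multi-prover protocol, by $1-(1-1/l)^{q}+o(1)\leq 1-e^{-q/l}+\delta$, and this bound is uniform over the $O(l)$ relevant values $q\leq Bl$ once the accuracy parameters of the reduction are taken (polynomially) large enough.

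The second step is a routine ``run-and-check'' reduction. Suppose the hypothesized polynomial algorithm $\mathcal{A}$ exists for some $B\geq 1$ and $0<\varepsilon<e^{-B}$; set $\delta:=\varepsilon/2$ and take an instance from the hard family above. If $\mathcal{A}$ outputs more than $Bl$ sets we reject --- correctly, since by hypothesis this cannot happen on a \textbf{Yes}-instance. Otherwise $\mathcal{A}$ returns a subcollection $\mathcal{C}$ with $|\mathcal{C}|=q=\beta l$ for an integer $q\leq Bl$, i.e.\ $\beta\in[0,B]$; we compute the number $N$ of ground-set elements covered by $\mathcal{C}$ (polynomial time) and accept iff $N\geq\bigl(1-e^{-\beta}+\varepsilon\bigr)n$. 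On a \textbf{Yes}-instance the optimum equals $n$, so by the guarantee on $\mathcal{A}$ we get $N\geq(1-e^{-\beta}+\varepsilon)n$ and accept; here $\varepsilon<e^{-B}\leq e^{-\beta}$ keeps this target strictly below $n$, so the hypothesis on $\mathcal{A}$ is non-vacuous. On a \textbf{No}-instance the \textbf{No}-property applied to the $q=\beta l$ sets of $\mathcal{C}$ gives $N\leq(1-e^{-\beta}+\delta)n<(1-e^{-\beta}+\varepsilon)n$, so we reject. Thus we decide the $\np$-hard promise problem in polynomial time, and $\p=\np$.

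The crux is Step 1: transferring Feige's fixed-cardinality soundness into a \emph{single} \textsf{Max Coverage} instance that is simultaneously hard for every $\beta\in[0,B]$, with the additive slack pushed below $\varepsilon$, and under $\p\neq\np$ rather than the weaker derandomization hypothesis originally used in \cite{Feige1998}. Step 2 is then only bookkeeping --- the passage from the continuous parameter $\beta$ to the integer $q=\beta l$ (absorbed into $\delta$), the truncation of $\mathcal{A}$'s output, and the role of $\varepsilon<e^{-B}$ in keeping the coverage target feasible.
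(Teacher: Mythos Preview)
The paper does not give its own proof of this theorem. It is quoted verbatim as Theorem~2.2 of Krishnaswamy--Sviridenko~\cite{Krishnaswamy2012}, with the sole addition that the hypothesis can be weakened from $\np \not\subseteq \mathsf{DTIME}(n^{O(\log\log n)})$ to $\p \neq \np$ by substituting the \textsf{Set Cover} hardness of Moshkovitz~\cite{Moshkovitz2015} and Dinur--Steurer~\cite{Dinur:2014:AAP:2591796.2591884} into the KS reduction, ``without any changes to the proof provided in~\cite{Krishnaswamy2012}.'' So there is nothing substantive to compare against on the paper's side.

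Your proposal goes further and sketches the actual argument. The two-step outline is correct. Step~2 (run-and-check against the threshold $(1-e^{-\beta}+\varepsilon)n$) is routine and exactly how one converts the gap promise into a contradiction with the assumed algorithm. Step~1 --- the existence of a single \textsf{Max Coverage} instance whose soundness holds \emph{simultaneously} for every $q\le Bl$, not just for the fixed budget $l$ --- is precisely the content of KS's Theorem~2.2, and you correctly identify it as the crux. Your justification for it (that Feige's partition-system soundness already bounds coverage by an arbitrary $q$-subcollection, uniformly in $q$, via $1-(1-1/l)^q+o(1)$) is the right intuition, and is in fact how KS argue; but turning this into a rigorous statement requires tracking that the $o(1)$ slack can be driven below a fixed $\delta$ uniformly over all $O(l)$ values of $q$, which is exactly the parameter-setting work done in~\cite{Krishnaswamy2012}. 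So your sketch is sound in structure, but a complete proof of Step~1 still amounts to reproducing (or citing) the KS argument --- which is all the paper does anyway.
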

 \vspace{-1mm}

Theorem 2.2 in \cite{Krishnaswamy2012} is actually stated under the stronger assumption of $\np \not\subseteq \mathsf{DTIME}(n^{\textit{O}(\log\log n)})$. Their reduction relies on the hardness of \textsf{Set Cover} which, at the time of that paper, was known only under this stronger assumption. Leveraging the set cover hardness result by \cite{Moshkovitz2015,Dinur:2014:AAP:2591796.2591884} under the weaker assumption of $\p \neq \np$, we arrive at Theorem \ref{betaSep} without any changes to the proof provided in \cite{Krishnaswamy2012}.

Note that the coverage function $f(\mathcal{A}) = \bigg|\bigcup\limits_{S \in \mathcal{A}}S\bigg|$ is a monotone, submodular function. The proof of Theorem \ref{hardnessThm} proceeds by considering a special case of UNSM where for a \textsf{Max Coverage} instance, $f_M(\mathcal{A})$ is taken to be a scaling of the coverage function and the additive cost function $c(\mathcal{A})$ is a scaling of the cardinality of the chosen set of subsets $\mathcal{A}$. We call this the \textsf{Profitted Max Coverage} problem. 
\begin{problem}(The \textsf{Profitted Max Coverage} problem)\label{profitMCProb}
An instance of this problem consists of an instance $\mathcal{I}=(X, \mathcal{S}, l)$ like the \textsf{Max Coverage} problem. Consider $\gamma$ to be a constant for this problem whose value will be revealed later. 

Let $f_M(\mathcal{A}) = \frac{(\gamma + 1)}{\gamma}\frac{\big|\bigcup\limits_{S \in \mathcal{A}}S\big|}{n}$ and $c(\mathcal{A}) = \frac{1}{\gamma}\frac{|\mathcal{A}|}{l}$. The goal is to maximize 
\small
\begin{align*}
f(\mathcal{A}) & = f_M(\mathcal{A}) - c(\mathcal{A}) \\
& = \frac{(\gamma + 1)}{\gamma}\frac{\bigg|\bigcup\limits_{S \in \mathcal{A}}S\bigg|}{n} - \frac{1}{\gamma}\frac{\big|\mathcal{A}\big|}{l}
\end{align*}
\end{problem}
\normalsize
\begin{proof}(of Theorem \ref{hardnessThm}) 
We want to show that if there exists a polynomial time algorithm which approximates the \textsf{Profitted Max Coverage} problem to a ratio better than \small\begin{align*}1 - \frac{\ln(\gamma+1)}{\gamma}+\varepsilon\frac{(\gamma + 1)}{\gamma},\end{align*}\normalsize then $\p = \np$.

We consider a hard instance $\mathcal{I} = (X, \mathcal{S},l)$ of the \textsf{Max Coverage} problem such that the optimal value is $n$ (i.e., there exist $l$ sets to cover the entire ground set $X$). Now, let functions $f, f_M \mbox{ and } c$ be defined as in Problem \ref{profitMCProb}.

[\textit{Completeness}] Let us take a collection of $l$ sets $\mathcal{G} = \{S_{i_1}, S_{i_2}, \ldots, S_{i_l} \}$ that cover the ground set X (such a collection exists because $\mathcal{I}$ is a \textsf{Max Coverage} instance with optimal value $n$). The optimal value of the corresponding \textsf{Profitted Max Coverage} instance occurs when exactly the sets in $\mathcal{G}$ are chosen. 
\small
\begin{align*}
f(\mathcal{G}) &= \frac{(\gamma + 1)}{\gamma}\frac{n}{n} - \frac{1}{\gamma}\frac{l}{l} \\
&= \frac{(\gamma + 1)}{\gamma} - \frac{1}{\gamma}\\
&= 1.
\end{align*}
\normalsize
Observe that $\frac{f(\mathcal{G})}{c(\mathcal{G})}=\gamma$.

[\textit{Soundness}] It is easy to see that we will never choose more than $(\gamma + 1)l$ sets as the function $f$ will take negative values in those cases. 

For any set, say $\mathcal{F}$, of $\beta l$ (where $\beta \in [0,\gamma + 1]$) subsets from $\mathcal{S}$  which cover at most $(1 - e^{-\beta}+\varepsilon)n$ elements, the value of the \textsf{Profitted Max Coverage} instance in this case is at most:
\small
\begin{align*}
f(\mathcal{F}) &\leq \frac{(\gamma + 1)}{\gamma}\frac{(1-e^{-\beta}+\varepsilon)n}{n} - \frac{1}{\gamma}\frac{\beta l}{l} \\
&= \frac{(\gamma + 1)}{\gamma}(1-e^{-\beta}+\varepsilon) - \frac{1}{\gamma}\beta\\
&= \frac{(\gamma + 1)(1-e^{-\beta}+\varepsilon) - \beta}{\gamma}.
\end{align*}
\normalsize
Differentiating the expression in the last line w.r.t $\beta$ and setting the derivative to 0, we get
\begin{align*}
 &\frac{\gamma+1}{\gamma}(e^{-\beta}) - \frac{1}{\gamma} = 0\\
 \implies&e^{\beta} = (\gamma + 1)\\
\implies & \beta = \ln (\gamma + 1) \leq (\gamma + 1).
\end{align*}

Thus, the value $f(\mathcal{F})$ is always less than the value attained for that value of $\beta$ and is 
\begin{align*}f(\mathcal{F})&\leq 1 - \frac{\ln(\gamma+1)}{\gamma}+\varepsilon\frac{(\gamma + 1)}{\gamma}.\end{align*}

Now, if there exists a polynomial time algorithm (say \textsf{Alg}) which solves  the \textsf{Profitted Max Coverage} problem to a factor better than $1 - \frac{\ln(\gamma+1)}{\gamma}+\varepsilon\frac{(\gamma + 1)}{\gamma}$, then on any input instance of the \textsf{Max Coverage} problem such that the optimal value is $n$, \textsf{Alg} will output a set $\mathcal{F}$ such that $f(\mathcal{F}) > 1 - \frac{\ln(\gamma+1)}{\gamma}+\varepsilon\frac{(\gamma + 1)}{\gamma}$ (since the optimal value is 1). Thus, $\mathcal{F}$ covers strictly more than $(1 - e^{-\beta}+\varepsilon)n$ elements with $\beta = \frac{|\mathcal{F}|}{l}$ (by contrapositivity). By Theorem \ref{betaSep}, we have $\p = \np$.

The above argument establishes the hardness for $\gamma = \frac{f(\Theta)}{c(\Theta)}$ for the function $c$ defined in Problem \ref{profitMCProb}. Since the factor depends only on $c(\Theta)$, if we can show that $c(\Theta) = c^* (\Theta)$ for these hard instances, we would be done. This can be shown by considering the expression for $c^*(\Theta)$ in this case : 
\small
\begin{align*}
c^*(\Theta) &= \sum_{i \in \Theta} \big(f(U\setminus\{i\}) - f(U)\big)\\
&=\sum_{i \in \Theta} \big(f_M (U \setminus \{i\}) - f_M(U) - c(U\setminus\{i\}) + c(U)\big)\\
&= \sum_{i \in \Theta} \big(c(U) - c(U\setminus\{i\})\big) + \sum_{i \in \Theta} \big(f_M (U \setminus \{i\}) - f_M(U)\big) \\
&= c(\Theta) + \sum_{i \in \Theta} \big(f_M (U \setminus \{i\}) - f_M(U)\big)\\
&= c(\Theta) + \frac{(\gamma + 1)}{\gamma \cdot n}\mathlarger{\sum_{i \in \Theta}}\bigg[ \bigg|\bigcup\limits_{S \in U \setminus \{i\}}S\bigg| - \bigg|\bigcup\limits_{S \in U}S\bigg|\bigg]
\end{align*}
\normalsize
Note that all the hard instances of \textsf{SetCover} and \textsf{Max Coverage} are derived from the construction of \cite{LundY94}. All such instances are such that each element has multiple subsets which may cover it (intuitively if there is only one subset which covers a particular element in any hard instance, then we will pick it and get a smaller, easier instance of the problem). Since the union of all subsets of the given instance is $n$ and so is the union of all but one of the available subsets in the hard instance, each term in the above summation is 0. This implies that $c^*(\Theta) = c(\Theta)$ and we are done.\end{proof}
\vspace{-0.8em} 
\section{Speeding up the Marginal Greedy}\label{speedup}
In the worst case, the \margG algorithm runs in $\bigoh(n^2 \cdot \mbox{EO})$ time, where $n$ is the number of shareable nodes and EO is the time to evaluate $bc(S)$, i.e., the time to optimize the batch of queries given the set of nodes $S$, to be materialized. This makes the algorithm expensive since $n$ itself may be exponential in the worst case. Thus, we would like to reduce the time taken by the algorithm without sacrificing on the theoretical guarantees on the quality of the solution proved in Section \ref{NewGreedy}. In this section, we present some optimizations to our algorithm to improve its running time.
\subsection{Basic Optimizations}
We first note that two optimizations presented in \cite{Roy2000} can be used for our algorithm as well. Their first observation is about searching only over all the shareable nodes. As noted above, this can be directly used by us since our algorithm just presents a different heuristic for choosing which nodes to materialize. Their second optimization presents a way to incrementally update the \textit{bestCost} function for various sets that exploits the result of earlier cost computations to incrementally compute the new plan. Since the $mb$ function is just a linear transformation of the \textit{bestCost} function and our greedy algorithm (at least when the decomposition presented in the proof of Proposition \ref{decomp} is used) is also concerned with just successive differences in the values of the \textit{bestCost} function, their optimization can also be used to speed up our algorithm; for details see \cite{Roy2000}. 

Another optimization (not in \cite{Roy2000}) that can be made is based on a simple observation of the greedy algorithm and by exploiting submodularity. In the $i^{th}$ iteration, the \margG algorithm needs to compute the maximum benefit to cost ratio $\frac{f'_M(e,X_{i-1})}{c(\{e\})}$. Thus, if while scanning elements to compute the maximum, we encounter an element which has the marginal-benefit to cost ratio less than 1, we can remove it from the set $Y$ of elements to be searched over as it will never be picked by the \margG algorithm in the future iterations either. This is because $f_M$ is also submodular and the size of $X_i$ always increases as $i$ increases so the value of the marginal-benefit to cost ratio only decreases as the algorithm proceeds and will never become greater than 1. A similar optimization for the simple greedy algorithm used for monotone, submodular maximization under cardinality constraints is also possible.
 \subsection{The \textsc{LazyMarginalGreedy} algorithm}
 The third optimization in \cite{Roy2000} essentially leverages supermodularity to improve the running time of the greedy algorithm. The argument is similar to that used by \cite{Minoux} for the \textsc{LazyGreedy} algorithm. We observe that a similar argument as the ones presented in these two papers may be used for the \margG algorithm and is presented next.

 As noted previously, in each iteration $i$, the \margG algorithm must identify the element $e$ with the maximum marginal-benefit to cost ratio $\frac{f'_M(e,X_{i-1})}{c(\{e\})}$. For each element $e$, the denominator is fixed and the marginal benefits are monotonically nonincreasing during the iterations of the algorithm, i.e., $f'_M(e,X_i) \geq f'_M(e,X_j)$ whenever $i \leq j$. Thus, instead of recomputing $\frac{f'_M(e,X_{i-1})}{c(\{e\})}$ for each element $e \in V$, which requires $\bigoh(n)$ computations of $f$, the \textsc{LazyMarginalGreedy} algorithm maintains a list of upper bounds $u(e)$ (initialized to a  large value) on the marginal-benefit to cost ratio sorted in decreasing order (using a heap). 

 In each iteration, the algorithm extracts the element with largest $u(e)$ from the ordered list of remaining elements. If, after this update, $u(e) \geq u(e') \ \forall e' \neq e$, then submodularity guarantees that $\frac{f'_M(e,X_{i-1})}{c(\{e\})} \geq \frac{f'_M(e',X_{i-1})}{c(\{e\})} \ \forall e' \neq e$, and therefore the algorithm has identified the element with the largest marginal benefit to cost ratio without  computing the ratio for a potentially large number of elements $e'$. 
\subsection{Universe Reduction under size constraints}
We may sometimes want to consider a cardinality constraint (say $k$) on the number of nodes to be materialized. This may arise due to storage constraints which only allow materialization of a few subexpressions. We adapt our greedy algorithm for this constraint by simply stopping after $k$ elements are picked. 

While, at this point, we do not show any theoretical approximation guarantees for this problem, there is a way to leverage this cardinality constraint to prune out certain elements from the ground set $U$. This preprocessing step may be used to reduce the size of the set of PQDAG nodes $U$ on which the algorithm will be run. 

We show that the algorithm run on this reduced set is the same as that obtained when the algorithm runs on the full set. This check is useful only when there is a cardinality constraint of $k < n$, as we will show.

\begin{theorem} Let $U = \{e_1, \ldots, e_n\}$ be the set of all shareable PQDAG nodes ordered as $\\ \frac{f'_M(e_1,U  \setminus  \{e_1\})}{c(\{e_1\})}  \geq  \ldots \geq \frac{f'_M(e_n,U \setminus \{e_n\})}{c(\{e_n\})}$. Furthermore, let\\
$U' = \{e \in U \big| \frac{f_M(e)}{c(\{e\})} \geq \frac{f'_M(e_k,U \setminus \{e_k\})}{c(\{e_k\})}\} \mbox{ for } k < n$.\\
The output of the \margG algorithm (with cardinality constraint of $k$) when it runs on $U$ is the same as the output when it runs on $U'$. 
\end{theorem}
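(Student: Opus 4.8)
The plan is to run \margG (with the cardinality bound $k$) side by side on $U$ and on $U'$ and show, by induction on the iteration index, that the two executions select exactly the same element at every step; it then follows that they halt at the same time with the same output. Write $f_M(e)$ for $f_M(\{e\}) = f'_M(e,\emptyset)$ and recall that at iteration $i$ the algorithm maximizes $r(e,X_{i-1}) = f'_M(e,X_{i-1})/c(\{e\})$ over the remaining candidates. As the ratios in the statement already presuppose, I take $c(\{e\})>0$ for every $e\in U$; any shareable nodes of non-positive additive cost are appended unconditionally by the final clean-up step of \margG in both runs, so they play no role in the comparison and may be set aside.

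First I would record the two consequences of submodularity of $f_M$ that do all the work. For any $e$ and any $Z\subseteq U\setminus\{e\}$, monotonicity of the marginal value gives $f'_M(e,U\setminus\{e\}) \le f'_M(e,Z) \le f'_M(e,\emptyset) = f_M(e)$. Dividing by $c(\{e\})>0$ yields: \textbf{(i)} if $e\notin U'$ then $r(e,Z) \le f_M(e)/c(\{e\}) < f'_M(e_k,U\setminus\{e_k\})/c(\{e_k\}) =: \tau$ for \emph{every} admissible $Z$; and \textbf{(ii)} for every $j\le k$ and every $Z$ not containing $e_j$, $r(e_j,Z) \ge f'_M(e_j,U\setminus\{e_j\})/c(\{e_j\}) \ge \tau$, the last step because the list $e_1,\dots,e_n$ is sorted in decreasing order of precisely this quantity and $j\le k$. (Specializing (ii) to $Z=U\setminus\{e_j\}$ also shows $\{e_1,\dots,e_k\}\subseteq U'$, so $|U'|\ge k$ and the constraint $k$ remains feasible on $U'$.)

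The crux is the inductive step. Suppose after $i-1$ steps both runs have chosen the same set $Z$ with $|Z|=i-1$; if $|Z|=k$ both have terminated and there is nothing to do, so assume $i-1<k$. On $U$ the candidate pool is $U\setminus Z$ and on $U'$ it is $U'\setminus Z\subseteq U\setminus Z$, so trivially $\max_{e\in U'\setminus Z} r(e,Z) \le \max_{e\in U\setminus Z} r(e,Z)$. For the reverse inequality I claim every maximizer of $r(\cdot,Z)$ over $U\setminus Z$ already lies in $U'$: since $|Z|=i-1<k$, at least one of $e_1,\dots,e_k$ is missing from $Z$, so by (ii) the maximum over $U\setminus Z$ is at least $\tau$, whereas by (i) any $e\in(U\setminus Z)\setminus U'$ has $r(e,Z)<\tau$ and hence cannot be a maximizer. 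Therefore the set of maximizers over $U\setminus Z$ equals the set of maximizers over $U'\setminus Z$, the two maxima agree, and under a fixed deterministic tie-break both runs pick the same element $x$. Since this common value $r(x,Z)$ together with the common $|Z|$ is exactly what the ``$r(x,Z)>1$ vs.\ cardinality reached'' test inspects, both runs also halt on the same iteration. This closes the induction and yields equality of the outputs; and when $k=n$ the same facts give $U'=U$, which is why the pruning is only of interest for $k<n$.

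The step I expect to be the main obstacle is the bookkeeping inside the inductive step: one must verify that the pigeonhole reservoir $\{e_1,\dots,e_k\}$ of guaranteed-above-$\tau$ elements lasts exactly as long as the budget does ($k-(i-1)\ge 1$ while $|Z|=i-1<k$), and that the ratio-$<1$ pruning optimization mentioned earlier does not interfere --- it does not, since any element it discards had ratio $\le 1<\tau$ at that moment and so was never a candidate to be chosen. Everything else --- the two submodularity facts, the $|U'|\ge k$ remark, and the tie-breaking point --- is essentially one line apiece.
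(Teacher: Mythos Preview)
Your proof is correct and follows essentially the same route as the paper's: both hinge on the pigeonhole observation that while fewer than $k$ elements have been selected some $e_j$ with $j\le k$ remains available with ratio at least $\tau$, whereas every element of $U\setminus U'$ has ratio strictly below $\tau$ at every stage, so no such element is ever a maximizer. The paper packages this via a case split and an auxiliary claim that $r(s_i,X_{i-1})\ge f'_M(e_i,U\setminus\{e_i\})/c(\{e_i\})$, while you run the two executions in parallel by induction; the logical content is the same, and your treatment of the stopping rule, tie-breaking, and the containment $\{e_1,\dots,e_k\}\subseteq U'$ is if anything more explicit than the paper's.
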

\begin{proof} The proof is provided in Appendix A.
\end{proof}
\begin{figure*}
\centering
\begin{subfigure}[b]{0.31\linewidth}
\centering
\includegraphics[height=4.5cm,width=\linewidth]{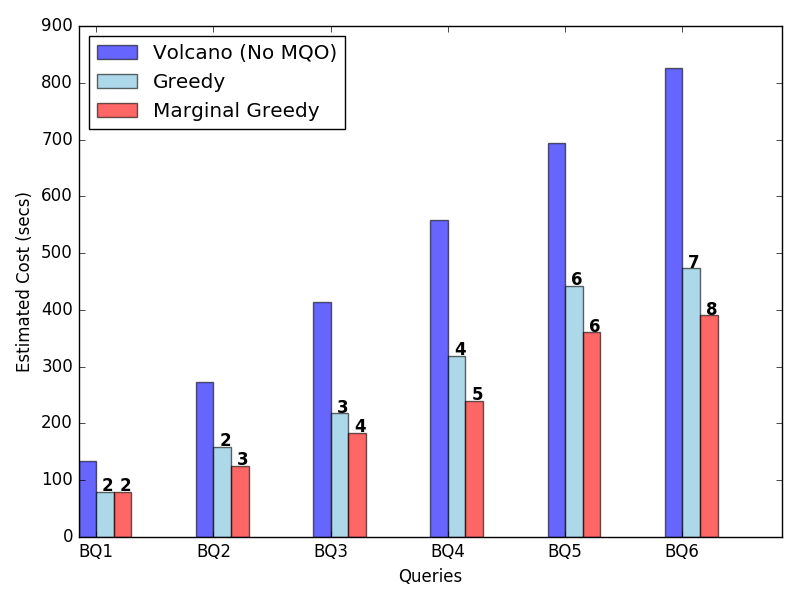}
\caption{1GB Total Size}
\label{fig:comp1gb}
\end{subfigure}
\begin{subfigure}[b]{0.31\linewidth}
\centering
\includegraphics[height=4.5cm,width=\linewidth]{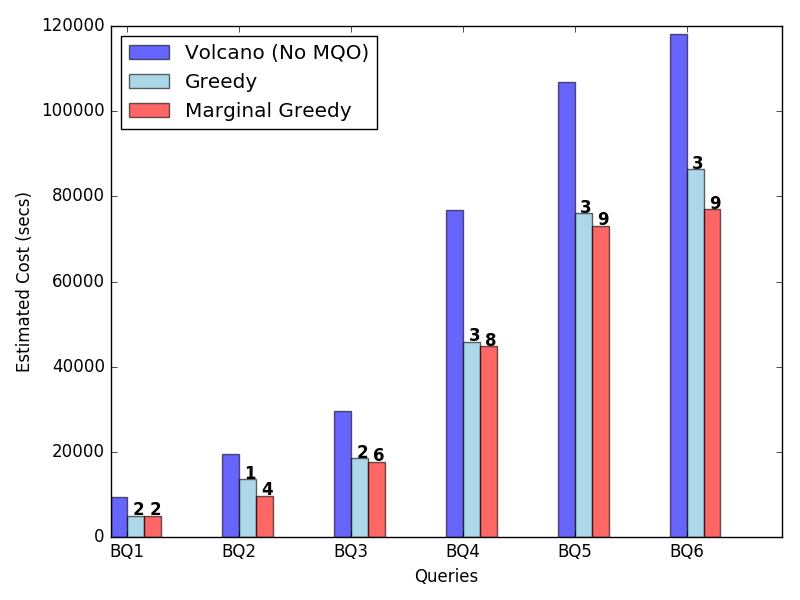}
\caption{100GB Total Size}
\label{fig:comp100gb}
\end{subfigure}
\begin{subfigure}[b]{0.31\linewidth}
\centering
\includegraphics[height=4.5cm,width=\linewidth]{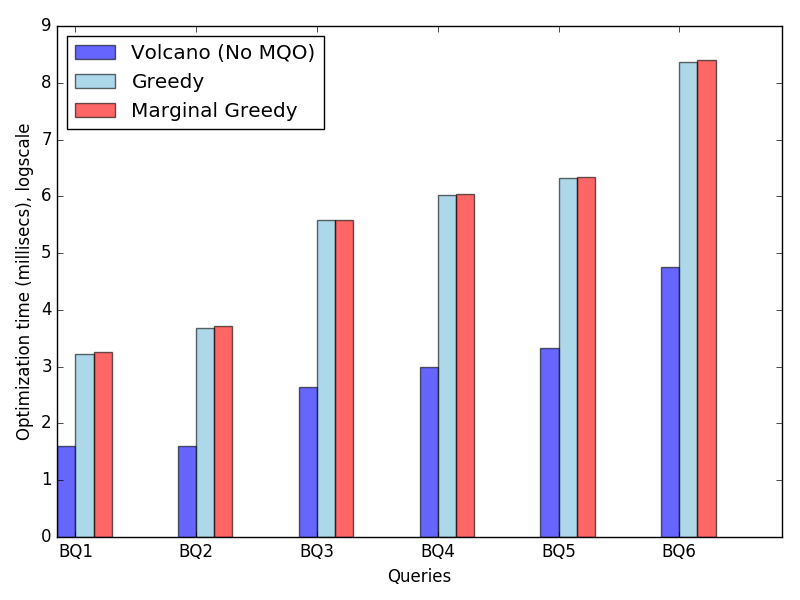}
\caption{Optimization Time (logscale)}
\label{fig:comptime}
\end{subfigure}
\caption{Results for batched TPCD queries (Experiment 1)}
\label{fig:tpcdcomp}
\end{figure*}

It is important to note that this strategy may not always lead to a reduction in the ground set but it may lead to pruning in certain cases.

Note that this pruning procedure can be modified to work for the simple greedy algorithm for monotone, submodular maximization under cardinality constraints. The proof is also along similar lines as those stated above.
\vspace{-0.8em}
\section{Experimental Section}\label{exptSection}
We now describe our experimental setup and findings. We worked with the original C++ code of Pyro which implemented the Greedy algorithm \cite{Roy2000}. We extended it by implementing the Marginal Greedy algorithm. All the optimizations discussed in Section 5 are implemented with the exception of the one discussed in subsection 5.3 as we are mainly interested in the best plan without imposing any cardinality constraints.

The optimizer rule set consists of select push down, join commutativity and associativity (to generate bushy join trees), and select and aggregate subsumption. The physical operators included sort-based aggregation, merge join, nested loop join, indexed selection and relation scan. The implementation includes handling physical properties (sort order and presence of indices) on base and intermediate relations, unification and subsumption during DAG generation (see \cite{Roy2000} for details).

 The block size was taken as 4KB and our cost functions assume 6MB is available to each operator during execution (we also conducted experiments with memory sizes of 128MB). Standard techniques were used for estimating costs, using statistics about relations. The cost estimates are of the standard resource consumption estimates which contain an I/O component and a CPU component, with seek time as 10 msec, transfer time of 2 msec/block for read and 4 msec/block for write, and CPU cost of 0.2 msec/block of data processed. 

We assume that intermediate results are pipelined to the next input, using an iterator model as in Volcano; they are saved to disk only if the result is to be materialized for sharing. The materialization cost is the cost of writing out the results sequentially. The tests were performed on a 2.4 GHz Intel i7 processor laptop with 8GB memory running Linux. We compare Marginal Greedy with Greedy and stand-alone Volcano (no MQO). The optimization time of our Marginal Greedy algorithm was very close to that of the Greedy algorithm in \cite{Roy2000}. The optimization times are measured as CPU time.
\vspace{-0.5em}
\subsection{Experiment 1 (Batched TPCD Queries)}
The workload for the first experiment models a system where several TPCD queries are executed as a batch. The workload consists of subsequences of the queries Q3, Q5, Q7, Q8, Q9 and Q10. Each query was repeated twice with different selection constants. Composite query BQ$i$ consists of the first $i$ of the above queries, and we used composite queries BQ$1$ to BQ$6$ in our experiments. The TPCD database is used at a scale of 1 (1 GB total size), with a clustered index on the primary keys for all the base relations. We also ran the queries in this experiment and the next at a scale of 100 (total size 100GB).

Note that although a query is repeated with two different values for a selection constant, we found that the selection operator generally lands up at the bottom of the best Volcano plan tree, and the two best plan trees may not have common subexpressions.

The results on the two workloads (1GB and 100 GB total sizes) are shown in Figure \ref{fig:tpcdcomp}. The number on top of the bars for Greedy and Marginal Greedy denotes the number of materialized nodes. Greedy does substantially better than Volcano (without MQO) by upto 57\%. Marginal Greedy always does as good as or better than Greedy. In fact, the results are the same only for BQ$1$ where both chose to materialize the two nodes which lead to benefit. For all other queries in the experiment with 1GB Total Size, the improvement of Marginal Greedy is always between 12\% and 25\%. This is primarily due to the number of materialized nodes by Marginal Greedy being more than that by Greedy. BQ$5$ is especially interesting in Figure \ref{fig:comp1gb} as the number of materialized nodes is the same yet there is almost a 20\% improvement over Greedy. In fact, for queries from BQ$4$ to BQ$6$, the intersection in the materialized nodes by the two algorithms had an overlap of 1 or 2 only.

In the experiment with 100GB Total Size (Figure \ref{fig:comp100gb}), as mentioned, the nodes chosen to be materialized for BQ$1$ are the same for both algorithms. For the rest of the queries, the number of materialized nodes is much larger than in the 1GB size dataset. While the relative gains in this dataset might seem comparable or slightly lesser than those observed in the smaller dataset, the actual gains in these cases are substantial due to large costs coming from these large data sizes. In these queries, there were 1 or 2 nodes which had substantially more benefit and got picked by both Greedy and Marginal Greedy. While Greedy picked a few more nodes which seemed benefical initially, Marginal Greedy picked many more nodes, each of which had moderate benefit but lead to an overall decrease in the cost. This behaviour was particularly observed in BQ$5$ and BQ$6$ and we believe for larger sets of queries on larger data sets, this behavior will be more pronounced.

The optimization times for the queries are shown in Figure \ref{fig:comptime}. Since the values for Greedy and Marginal Greedy were very close to each other, we present the results in logscale. As can be seen, even in logscale, the optimization times are very close to each other. We stress that while the execution cost of a query depends on the size of the underlying data, the cost of optimization does not.
\vspace{-0.6em}
\subsection{Experiment 2 (Stand-Alone TPCD Queries)}
Roy et al. \cite{Roy2000} also had an experiment consisting of four individual queries based on TPCD using the same data sizes (1GB and 100GB) and the same indices. These queries had common subexpressions within themselves and benefitted from MQO to optimize just those queries individually. However, in all four queries, only node was beneficial and hence, both algorithms found that node and resulted in the same answer. We present the results here for completeness. We explain these queries themselves and the actual results are presented in Figure \ref{fig:tpcd} in Appendix B.

TPCD query Q2 has a large nested query, and repeated invocations of the nested query in a correlated evaluation could benefit from reusing some of the intermediate results. Greedy and Marginal Greedy gave a plan with an estimated cost of 79 seconds for the smaller data set and 1929 seconds for the larger one. Decorrelation is an alternative to correlated evaluation and Q2-D is a (manually) decorrelated version of Q2 (due to decorrelation, Q2-D is actually a batch of queries). Multi-query optimization also gives substantial gains on the decorrelated query Q2-D, results in a plan of estimated cost 46 and 2059 for the two data sizes respectively, by both algorithms.
We next considered the TPCD queries Q11 and Q15, both of which have common subexpressions, and hence make a case for multi-query optimization. For Q11, both the greedy algorithms lead to a plan of approximately half the cost as that returned by Volcano. The improvements for Q15 are similar but more pronounced for the smaller data set. 

The conclusion based on the experiments seems to be that when there are multiple possible nodes that can be materialized, Greedy chooses the nodes which result in considerable improvements early on but Marginal Greedy is more global and chooses to materialize more nodes which might have moderate benefit individually but can result in overall benefits.

\vspace{-0.8em}
\section{Related Work}\label{RelWork}
We now present the related work in the areas of multi-query optimization and submodular maximization.
\subsection{Multi-Query Optimization}
The MQO problem has received significant attention in the past \cite{Sellis:1988:MO:42201.42203,Park:1988:UCS:645473.653403,Shim1994,Rosenthal1988,Subramanian1998}. Initial work \cite{Sellis:1988:MO:42201.42203,Park:1988:UCS:645473.653403,Rosenthal1988,Shim1994} proposed solutions that were not fully integrated with the query optimizer and were primarily exhaustive. 

Subramanian and Venkataraman \cite{Subramanian1998} consider sharing only among the best plans of the query; this approach can be implemented as an efficient, post-optimization phase in existing systems, but can be highly suboptimal. 

To choose the set of nodes to be materialized, Roy et al. \cite{Roy2000} use a greedy algorithm which has already been discussed in detail in Section \ref{Prelims}. Dalvi et al. \cite{Dalvi2003} explores the possibility of sharing intermediate results by pipelining, avoiding unnecessary materializiations. Diwan et al. \cite{Sudarshan2006} consider the MQO problem in Volcano taking scheduling and caching into account. They present an exhaustive algorithm which takes $\bigoh(n^n)$ time, which is clearly infeasible. 

 Zhou et al.\cite{Zhou2007} propose a framework to use common subexpressions for MQO and materialized view selection in a query optimizer based on the Cascades framework \cite{Graefe1995}. The focus however is on ``covering'' subexpressions at the LQDAG level and they do not take into account competing physical properties like sort orders and partitioning properties from different consumers. 

 Silva et al. \cite{Silva2012} consider physical properties in a cost-based fashion. However, their solution is also based on heuristics which materializes \textit{every} common subexpression at the LQDAG level. The best physical property for each subexpression is chosen and all consumers are forced to use the same physical property, which can be sub-optimal. Furthermore, even with this heuristic, their approach can be very expensive when there are many potential physical properties for each subexpression.

\subsection{Submodular Maximization}
Submodular maximization has received a significant amount of attention in optimization \cite{Calinescu:2011:MMS:2340436.2340447,Nemhauser1978,Buchbinder2012} and has wide applicability in machine learning, computer vision and information retrieval \cite{JegelkaB2011,Kempe2003,Boykov2001,Torr}. In this problem, we are given a submodular function $f$ and a universe $U$, with the goal of selecting a subset $S \subseteq U$ such that $f(S)$ is maximized. Typically, $S$ must satisfy additional feasibility constraints such as cardinality, knapsack or matroid constraints.

This problem is $\np$-hard even for the simplest problems which involve only \textit{cardinality constraints} and monotone functions. Nemhauser et al. \cite{Nemhauser1978} show that a simple greedy algorithm gives a $(1-1/e)$ approximation for monotone submodular maximization under cardinality constraints. They further show that it is $\np$-hard to obtain a better approximation guarantee. Sviridenko \cite{DBLP:journals/orl/Sviridenko04} presents a modified greedy algorithm for monotone submodular function maximization under knapsack constraints. Their algorithm is the main motivation for our marginal greedy algorithm. 

Buchbinder et al. \cite{Buchbinder2012} gave a $1/2$-approximation algorithm for unconstrained non-monotone submodular maximization, for which there is a matching hardness result. However, all these results assume non-negativity of the function $f$. Mittal and Shulz \cite{MittalS13} show that a constant factor approximation for non-negative supermodular minimization is $\np$-hard. Inapproximability of non-monotone submodular maximization (with possibly negative values) is also well known. To the best of our knowledge, ours is the first work which, under the assumption of $f(\emptyset) = 0$, provides an approximation algorithm with a matching hardness of approximation result for unconstrained non-monotone submodular maximization when the function may take negative values. Since the hardness of approximation factor depends on the optimal (and may go to 0), this rules out constant factor approximations for the problem even in the restricted setting of $f(\emptyset)=0$. 
\vspace{-0.8em}
\section{Conclusions and Future Work}\label{Conclusion}
In this paper, we have presented a reformulation of the well-studied MQO problem. Under the assumption of supermodularity of the \textit{bestCost} function, we propose a greedy algorithm for the maximization problem and provide an approximation factor guarantee for our algorithm. We then showed that obtaining a better approximation factor than the one attained by our greedy algorithm is $\np$-hard. Such theoretical guarantees on the quality of any heuristic has not been presented before. Since the underlying problem solved in this paper is the unconstrained, normalized submodular maximization problem, with possibly negative values, we believe our results can be useful beyond just MQO.

One area of future work is the problem of \textit{non-negative}, non-monotone submodular maximization problem under cardinality constraints and more generally, matroid constraints. This is an outstanding open problem and even the most recent work \cite{EneMatroidSubmodularMax} has a considerable gap in the approximation ratio and the hardness of approximation known. We would like to see if ideas in this paper like the ``best decomposition'' can be used to devise algorithms with better guarantees for that problem. 
\section{Acknowledgments}
We would like to thank Amit Deshpande, Deeparnab Chakrabarty, Ravishankar Krishnaswamy and Sebastien Tavenas for comments on the paper.
%

\bibliographystyle{abbrv}
\bibliography{vldb_sample}

%
%
 \appendix
 \section{Additional Proofs}
 We now present the missing proofs. We reproduce the theorem statements for convenience.

\vspace{0.2em}
\textbf{Proposition} 1. (in the main paper) Any normalized, non-monotone (which may take negative values) submodular function $f$ can be decomposed as \begin{align*} f(S) = f_M(S) - c(S) \ \ , \forall \ S \subseteq U\end{align*} where $f_M$ is a monotone submodular function and $c$ is an additive cost function. In particular, one possible decomposition is
\begin{align*}
& f_M^*(S) = f(S) + \sum\limits_{e \in S}(f(U \setminus \{e\})-f(U))\\
& c^*(S) = \sum\limits_{e \in S}(f(U \setminus \{e\}) - f(U)) 
\end{align*}

\begin{proof} It is easy to see that $c$ is additive and that \begin{align*} \forall \ S \subseteq X, \mbox{ we have } f(S) = f_M(S) - c(S) \end{align*}
Since $c$ is additive and $f$ is submodular, $f_M$ is also submodular since for arbitrary $S_1 \subset S_2 \subset U$ and an arbitrary $ e \in U \setminus S_2$, 
\begin{align*} & f_M(S_1 \cup \{e\}) - f_M(S_1)&\\
= \ & f(S_1 \cup \{e\}) - c(S_1 \cup \{e\}) - f(S_1) + c(S_1) &\\
= \ & f(S_1 \cup \{e\}) - f(S_1) - c(\{e\}) \ \  (\mbox{by linearity of } c)&  \\
\geq \ &f(S_2 \cup \{e\}) - f(S_2) - c(\{e\}) \ \ (\mbox{by submodularity of } f)&\\
= \ & f(S_2 \cup \{e\}) - f(S_2) - c( S_2 \cup \{e\}) + c(S_2) \ \  (\mbox{by linearity})&\\
= \ & f_M(S_2 \cup \{e\}) - f_M(S_2). &
\end{align*} 
Now we just have to show that $f_M$ is monotone. \\
Consider an arbitrary $S \subset U$ and an arbitrary $ e \in U  \setminus  S$. Let us consider the expression 
\begin{align*} & \ f_M(S\cup\{e\}) -  f_M(S)\\ 
= \ &  \  f(S\cup\{e\}) - f (S) + (f(U  \setminus  \{e\}) - f(U))\\
= \ &  \  ( f(S\cup\{e\}) - f (S) ) - (f(U) -  f(U  \setminus  \{e\}) )\\
\geq \  &  0 & \end{align*}
The inequality in the last line follows from the fact that $S \subseteq U  \setminus \{e\}$ and the submodularity of $f$. The terms in the summation can be suitably scaled to ensure that $c$ is zero only at $\emptyset$ and positive everywhere else.
\end{proof}

\textbf{Proposition 2}. (in the main paper) Given an arbitrary decomposition $f_M$ and $c$ of a normalized submodular function $f$, i.e., $f(S) = f_M(S) - c(S) \ \forall \ S \subseteq V$ with monotone $f_M$ and consider another decomposition 

\begin{align*}
\widetilde{f}_M (S) &= f_M (S) - \sum_{i \in S} \big(f_M (U) - f_M (U \setminus i) \big) \\
\widetilde{c} (S) &= c(S) - \sum_{i \in S} \big(f_M (U) - f_M (U \setminus i) \big)
\end{align*}

Then, $\widetilde{f}_M$ is monotone. Furthermore, for the decomposition in Proposition \ref{decomp}, $f^*_M$ and $c^*$, $\widetilde{f}^*_M = f^*_M$ and $\widetilde{c}^* = c^*$. 
\begin{figure*}[t]
\centering
\begin{subfigure}[b]{0.31\linewidth}
\centering
\includegraphics[height=4.5cm,width=\linewidth]{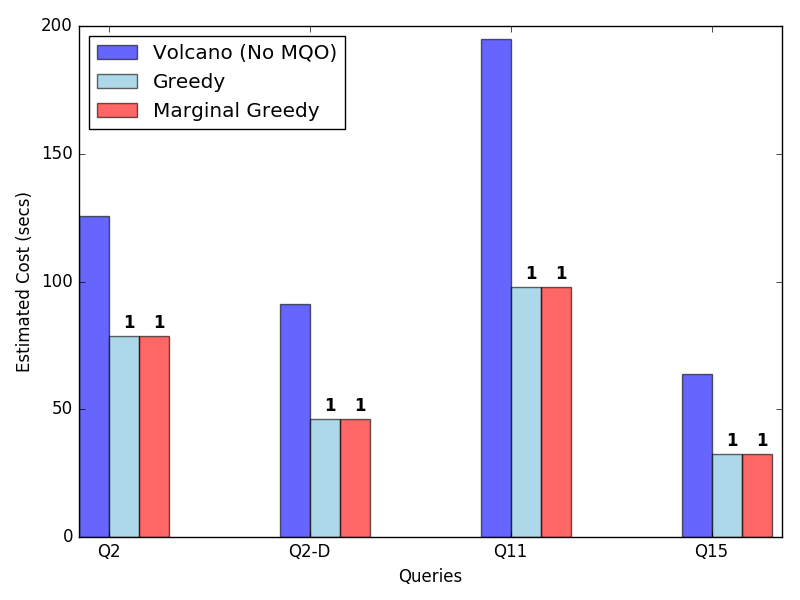}
\caption{1GB Total Size}
\label{fig:tpcd1gb}
\end{subfigure}
\begin{subfigure}[b]{0.31\linewidth}
\centering
\includegraphics[height=4.5cm,width=\linewidth]{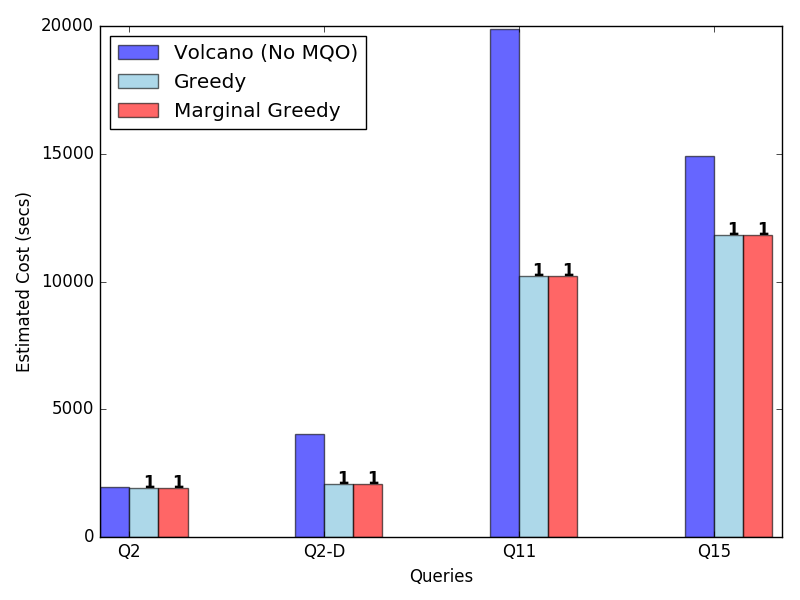}
\caption{100GB Total Size}
\label{fig:tpcd100gb}
\end{subfigure}
\begin{subfigure}[b]{0.31\linewidth}
\centering
\includegraphics[height=4.5cm,width=\linewidth]{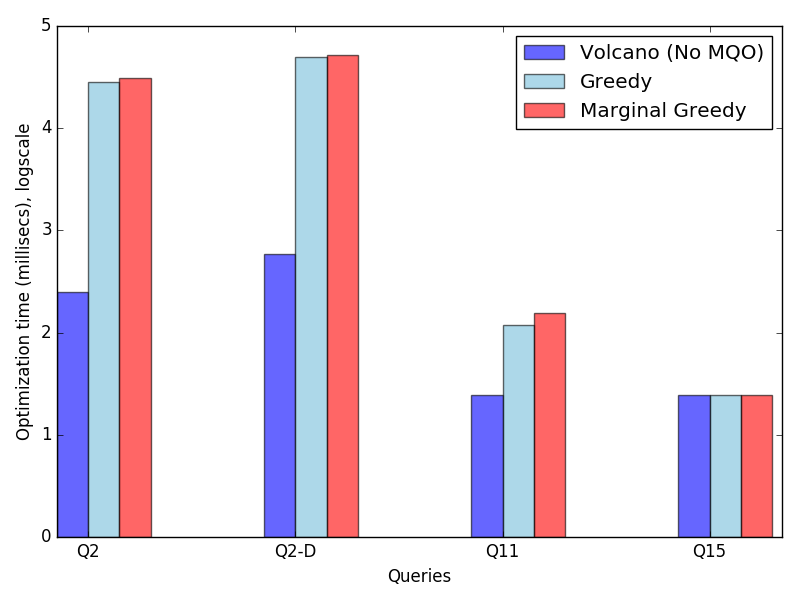}
\caption{Optimization Times (logscale)}
\label{fig:tpcdtimes}
\end{subfigure}
\caption{Results for stand-alone TPCD queries (Experiment 2)}
\label{fig:tpcd}
\end{figure*}
\begin{proof} To show monotonicity of $\widetilde{f}_M$, it is enough to show $\forall j \in U, \ \forall S \subseteq U \setminus \{j\}, \widetilde{f}_M(S \cup \{j\}) - \widetilde{f}_M (S) \geq 0$. 

\begin{align*}
&\widetilde{f}_M(S \cup \{j\}) - \widetilde{f}_M (S)\\
&= f_M(S \cup \{j\}) - f_M (S) -  \big(f_M (U) - f_M (U \setminus \{j\}) \big)\\
&\geq 0 \ \ \ \mbox{(by submodularity of } f_M)
\end{align*}

For the second part, we just expand the expressions to get the desired result.
\small
\begin{align*}
\widetilde{c}^* (S) &= c^* (S) - \sum\limits_{i \in S} \big(f_M^*(U) - f_M^*(U \setminus\{i\}) \big)\\
&= c^* (S) - \sum\limits_{i \in S} f(U) - f(U \setminus\{i\}) + \big(f(U \setminus \{i\}) - f(U) \big)\\
&= c^* (S)
\end{align*}
\normalsize
The computation for $\widetilde{f}_M^*(S)$ proceeds similarly.
\end{proof}
\vspace{0.2em}
 \textbf{Theorem} 4. (in the main paper) Let the set of all shareable PQDAG nodes $U = \{e_1, \ldots, e_n\}$ be ordered as $\\ \frac{f'_M(e_1,U  \setminus  \{e_1\})}{c(\{e_1\})}  \geq  \ldots \geq \frac{f'_M(e_n,U \setminus \{e_n\})}{c(\{e_n\})}$. Furthermore, let\\
$U' = \{e \in U \big| \frac{f_M(e)}{c(\{e\})} \geq \frac{f'_M(e_k,U \setminus \{e_k\})}{c(\{e_k\})}\} \mbox{ for } k < n$.\\
The output of the \margG algorithm (with cardinality constraint of $k$) when it runs on $U$ is the same as the output when it runs on $U'$. 

\begin{proof}
Without loss of generality, assume that the algorithm, when run on V, terminates after the full $k$ steps. Let the sequence of chosen elements, in order of inclusion, be $\{s_1, s_2, \ldots, s_k\}$ and for all $i \in [k]$, let $X_i = \{s_1, s_2, \ldots, s_i\}$, as before. Clearly, $\emptyset = X_0 \subset X_1 \subset X_2 \subset \ldots \subset X_k$.

\textbf{Case 1.} $k = n$\\
This is a simple case in which all elements are chosen and, thus, $U'$ should be equal to $U$ which is shown as follows
 $\forall e \in U$, we have

\begin{align*}
   \frac{f_M(\{e\})}{c(\{e\})} &= \frac{f'_M(e,\emptyset)}{c(\{e\})} \\
  & \geq \frac{f'_M(e, U \setminus \{e\})}{c(\{e\})} & \mbox{(by submodularity)}\\
  & \geq \frac{f'_M(e_k, U \setminus \{e_k\})}{c(\{e_k\})}.
\end{align*}

Hence, all elements of $U$ are going to be in $U'$ since they all satisfy the condition to be in $U'$. In this case, the check is clearly wasteful since the ground set has no reduction and a lot of functional calls are made. In the MQO context, this corresponds to invoking a lot of \textit{bestCost}$(Q, S)$ calls, each of which are moderately expensive. Thus, in this case, the preprocessing step should just check if $k = n$ and if so, directly pass the full ground set to the greedy algorithm.

\textbf{Case 2.} $ k < n \ \& \ X_k = \{e_1, e_2, \ldots, e_k\}$. \\In this case, the theorem follows trivially since $U'$ will contain all elements in $X_k$, along with some other elements. 

\textbf{Case 3.} $ k < n \ \& \ X_k \neq \{e_1, e_2, \ldots, e_k\}$.\\
We first make a claim which we will prove later.
\begin{claim}\label{intermediateClaim} For all $i \in \{1, 2,\ldots, k\}$, we have
\begin{align*}
 \frac{f'_M(s_i, X_{i-1})}{c(\{s_i\})} \geq \frac{f'_M(e_i, U \setminus \{e_i\})}{c(\{e_i\})}.
\end{align*}
\end{claim}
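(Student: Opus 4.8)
The plan is to prove Claim~\ref{intermediateClaim} by a simple pigeonhole argument on set sizes, combined with the submodularity of $f_M$ and the defining order on $U$; the theorem itself then follows with little extra work.

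Fix $i \in \{1, \ldots, k\}$. The set $X_{i-1} = \{s_1, \ldots, s_{i-1}\}$ has only $i-1$ elements, whereas the prefix $\{e_1, \ldots, e_i\}$ of the order has $i$ elements, so there must be an index $j \le i$ with $e_j \notin X_{i-1}$. I would then chain two inequalities for this particular $e_j$. First, since $X_{i-1} \subseteq U \setminus \{e_j\}$, submodularity of $f_M$ gives $f'_M(e_j, X_{i-1}) \ge f'_M(e_j, U \setminus \{e_j\})$, and dividing by the positive quantity $c(\{e_j\})$ yields $r(e_j, X_{i-1}) \ge \frac{f'_M(e_j, U \setminus \{e_j\})}{c(\{e_j\})}$. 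Second, because $j \le i$ and $U$ is ordered by nonincreasing ``leave-one-out'' ratio, we have $\frac{f'_M(e_j, U \setminus \{e_j\})}{c(\{e_j\})} \ge \frac{f'_M(e_i, U \setminus \{e_i\})}{c(\{e_i\})}$. Hence $e_j$ is an element of $U \setminus X_{i-1}$ whose marginal-benefit to cost ratio at step $i$ is at least the right-hand side of the claim. Since the \margG algorithm chooses $s_i$ as a maximizer of $r(\cdot, X_{i-1})$ over exactly $U \setminus X_{i-1}$, we conclude $r(s_i, X_{i-1}) \ge r(e_j, X_{i-1}) \ge \frac{f'_M(e_i, U \setminus \{e_i\})}{c(\{e_i\})}$, which is the claim.

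To close out the theorem from here: applying the claim and then $f_M(\{s_i\}) = f'_M(s_i, \emptyset) \ge f'_M(s_i, X_{i-1})$ (again monotone marginals, as $\emptyset \subseteq X_{i-1}$ and $s_i \notin X_{i-1}$), together with $\frac{f'_M(e_i, U \setminus \{e_i\})}{c(\{e_i\})} \ge \frac{f'_M(e_k, U \setminus \{e_k\})}{c(\{e_k\})}$ for $i \le k$, shows that every chosen $s_i$ satisfies $\frac{f_M(\{s_i\})}{c(\{s_i\})} \ge \frac{f'_M(e_k, U \setminus \{e_k\})}{c(\{e_k\})}$, i.e.\ $X_k \subseteq U'$. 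Since moreover $U' \subseteq U$ and at each step $s_i$ was the maximizer of $r(\cdot, X_{i-1})$ over the larger set $U \setminus X_{i-1} \supseteq U' \setminus X_{i-1} \ni s_i$, it is also a maximizer over $U' \setminus X_{i-1}$; so the run on $U'$ reproduces the sequence $s_1, \ldots, s_k$ and therefore the same output.

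The main obstacle is minor: the only point needing care is the pigeonhole observation that some $e_j$ with $j \le i$ avoids $X_{i-1}$, plus the bookkeeping that the relevant $c(\{e\})$ are strictly positive so that the divisions and the ordering make sense (nodes with negative cost are appended unconditionally and may be ignored here). One should also fix a deterministic tie-breaking rule in the $\arg\max$ so that ``maximizer over $U \setminus X_{i-1}$'' and ``maximizer over $U' \setminus X_{i-1}$'' literally name the same element.
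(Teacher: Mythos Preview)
Your proof of the claim is correct and follows essentially the same route as the paper's: a pigeonhole argument to find some $e_j$ with $j\le i$ outside $X_{i-1}$, then submodularity to pass from $X_{i-1}$ to $U\setminus\{e_j\}$, then the defining order on $U$ to reach $e_i$, and finally the greedy choice of $s_i$ to close. The only cosmetic difference is that the paper first disposes of the case $e_i\notin X_{i-1}$ separately, whereas you handle both cases uniformly via the pigeonhole step; your additional remarks on deriving the theorem (showing $X_k\subseteq U'$ and that the run on $U'$ reproduces the same sequence) are also correct.
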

The claim is used to show that elements in $U \setminus U'$ will never be picked by the \margG algorithm. Intuitively, for any element $e \notin U'$, the $\frac{f'_M(e,X_i)}{c(\{e\})}$ ratio of picking it is largest in the first iteration (by submodularity) and that itself is less than the element with the smallest ratio of the elements selected by the greedy algorithm. So, it is guaranteed that the greedy algorithm does not pick any element which is not in $U'$. This is easy to see and is as follows\\
For all $ e \in U \setminus U'$, we have 
\begin{align*}
 &\frac{f'_M(e,\emptyset)}{c(\{e\})} = \frac{f_M(e)}{c(\{e\})} < \frac{f'_M(e_k,U \setminus \{e_k\})}{c(\{e_k\})}.
 \end{align*}
 By Claim \ref{intermediateClaim},
\begin{align*}
 & \frac{f'_M(s_k,X_{k-1})}{c(\{s_k\})} \geq \frac{f'_M(e_k,U \setminus \{e_k\})}{c(\{e_k\})}  \\
\implies & \frac{f'_M(s_k,X_{k-1})}{c(\{s_k\})} > \frac{f'_M(e,\emptyset)}{c(\{e\})}.
\end{align*}
We now present the proof of Claim \ref{intermediateClaim}.
\begin{proof}(of Claim \ref{intermediateClaim}) The case of $e_i \not\in X_{i-1}$ is trivial due to submodularity and the greedy algorithm.

Thus, we just have to prove for the case when $e_i \in X_{i-1}$\\
Since $|X_{i-1}| = i - 1, X_{i-1}$ cannot include all elements from the set $\{e_1, e_2, \ldots, e_i\}$. Thus, there exists some element, say, $e_z \in {e_1, e_2,\ldots, e_i}$ such that $e_z \notin X_{i-1}$.

\begin{align*}
\mbox{Thus, we have } & \hspace{0.5em} \frac{f'_M(s_i,X_{i-1})}{c(\{s_i\})} \\
=&  \max\limits_{e \in U \setminus X_{i-1}} \frac{f'_M(e, X_{i-1})}{c(\{e\})}\\
\geq &  \frac{f'_M(e_z, X_{i-1})}{c(\{e_z\})} \\
\geq &  \frac{f'_M(e_z, U \setminus \{e_z\})}{c(\{e_z\})} \mbox{ (by submodularity)} \\
\geq &  \frac{f'_M(e_i, U \setminus \{e_i\})}{c(\{e_i\})}
\end{align*}\end{proof}
This concludes our proof.
\end{proof}
\section{Results of Experiment 2}
In this section, we present the results of Experiment 2 (Stand-alone TPCD). The results are shown in Figure \ref{fig:tpcd}.

\end{document}